\numberwithin{equation}{section}
\newtheorem{theorem}{Theorem}[section]
\newtheorem{corollary}[theorem]{Corollary}
\newtheorem{lemma}[theorem]{Lemma}
\theoremstyle{definition}
\newtheorem{remark}[theorem]{Remark}
\newcommand{\rset}{\mathbf{R}}
\newcommand{\tset}{\mathbf{T}}
\begin{document}

\begin{abstract}
Ehrenfest, Born-Oppenheimer, Langevin and Smoluchowski 
dynamics 
are shown to be  accurate approximations of time-independent Schr\"odinger observables 
for a molecular system avoiding caustics,
in the limit of large ratio of nuclei and electron masses, without assuming that
the nuclei are localized to vanishing domains. 
The derivation,
based on a Hamiltonian system interpretation of the Schr\"odinger equation 
and stability of the corresponding Hamilton-Jacobi equation,
bypasses the usual separation
of nuclei and electron wave functions, includes crossing electron eigenvalues, 
and gives a different perspective on 
the Born-Oppenheimer approximation, Schr\"odinger Hamiltonian systems, 
stochastic electron equilibrium states and numerical simulation
in molecular dynamics modeling.
\end{abstract}

\title[Molecular Dynamics Derived from the Schr\"odinger Equation]
{Stochastic and Deterministic Molecular Dynamics Derived from the
Time-independent Schr\"odinger Equation }

\subjclass[2000]{Primary: 81Q20; Secondary: 82C10}
\keywords{Born-Oppenheimer approximation, WKB expansion, 
quantum classical molecular dynamics, foundations of quantum mechanics, Schr\"odinger operators}

\author{Anders Szepessy}
\address{Department of Mathematics\\
  Kungl. Tekniska H\"ogskolan\\
  100 44 Stockholm\\
  Sweden}
\email{szepessy@kth.se}

\maketitle

\tableofcontents

\section{The Schr\"odinger  and molecular dynamics models}

The {\it time-independent  Schr\"odinger} equation 
\begin{equation}\label{schrodinger_stat}
H(x,X) \Phi(x,X) = E\Phi(x,X),
\end{equation}
models nuclei-electron systems and  is
obtained from minimization of the energy in the solution space of wave functions,  cf. 
\cite{schrodinger,schiff,berezin,tanner,lebris_hand}.
It is an eigenvalue problem for the energy  $E\in\rset$ of the system in the solution space, 
described by wave functions, $\Phi:\rset^{3J}\times\rset^{3N}\rightarrow \mathbb{C}$, depending on electron coordinates $x=(x^1,\ldots, x^J)\in\rset^{3J}$,
nuclei coordinates $X=(X^1,\ldots, X^N)\in\rset^{3N}$,  and a 
Hamiltonian operator $H(x,X)$ 
\begin{equation}\label{V-definition}
H(x,X)= V(x,X) - \frac{1}{2} M^{-1}\sum_{n=1}^N\Delta_{X^n}.
\end{equation}
The nuclei masses  $M$ are assumed to be large  and 
the interaction potential $V$, independent of $M$, is in the canonical
setting (neglecting relativistic and magnetic effect),
\begin{equation} \label{ham}
\begin{array}{ll}
V(x,X)  =&  - \frac{1}{2}\sum_{j=1}^J\Delta_{x^j} + \sum_{1\le k < j\le J}\frac{1}{|x^k-x^j|}  \\
&- \sum_{n=1}^N \sum_{j=1}^J \frac{Z_n}{|x^j-X^n|}+ \sum_{1\le n<m\le N}\frac{Z_nZ_m}{|X^n-X^m|}, 
\end{array}
\end{equation}
composed of the kinetic energy of the electrons, the electron-electron repulsion,
the electron-nuclei attraction, and the repulsion of nuclei (with charge $Z_n$), in the Hartree atomic units
where the electron mass, electron charge, reduced Planck constant, and the Coulomb force constant 
$(4\pi\epsilon_0)^{-1}$ all are one.
The mass of the nuclei, which are much greater than one (electron mass),  
are the diagonal elements in the diagonal matrix $M$. 

An essential feature of  the partial differential equation (\ref{schrodinger_stat})
is the high computational complexity to determine  the solution, in an antisymmetric/symmetric
subset of the 
Sobolev space $H^1(\rset^{3(J+N)})$.
Wave functions depend also on discrete spin states
\begin{equation}\label{spin_def}
\Phi(x^1,\sigma_1, \ldots,  x^J, \sigma_J, X_1,\Sigma_1, \ldots,X_N,\Sigma_N),
\end{equation}
which effect the solutions space:
each electron spin $\sigma_j$ can take two different values and
each nucleus can be in a finite set of spin states  $\Sigma_n$;
the {\it Pauli exclusion principle} restricts the solutions space to
wave functions satisfying the antisymmetry/symmetry
\[
\Phi(\ldots,x^j,\sigma_j,\ldots,x^k,\sigma_k,\ldots )=-\Phi(\ldots,x^k,\sigma_k,\ldots,x^j,\sigma_j,\ldots ) 
\mbox{  for any $1\le j,k\le J$}\]
and  for any pair of nuclei $n$ and $m$, with $A$ nucleons and the same number of protons and neutrons,
\[
\Phi(\ldots,X^m,\Sigma_m,\ldots,X^n,\sigma_n,\ldots )=(-1)^{A}\Phi(\ldots,X^n,\Sigma_n,\ldots,X^m,\Sigma_m,\ldots ) 
\] 
cf. \cite{lebris_hand}. We simplify the notation  by writing $\Phi(x,X)$
 instead of the more complete (\ref{spin_def}),
 since the Hamiltonian $H$ does not depend on the spin. 
 The time-independent Schr\"odinger equation has convincing agreement
with experimental results,  as the basis for computational chemistry
and solid state physics.
An attractive property of the Schr\"odinger equation (\ref{schrodinger_stat})
is the precise definition of the Hamiltonian and the solutions space,
without unknown parameters. 
The agreement
with measurements can be further
improved by including relativistic and magnetic effects, cf. \cite{lebris_hand}.

In contrast to the Schr\"odinger equation, 
a {\it molecular dynamics} model of nuclei $X:[0,T]\rightarrow\rset^{3N}$, with a given potential $V_p:\rset^{3N}\rightarrow \rset$,
can be computationally studied also for large $N$  by solving the ordinary differential equation 
\begin{equation}\label{md_eq}
M\ddot X_\tau=- \partial_XV_p(X_\tau).
\end{equation}
This computational and conceptual simplification motivates the study to determine 
the potential and its implied accuracy by  a derivation
of molecular dynamics
from the Schr\"odinger equation,  as started  already
in the 1920's
with the seminal Born-Oppenheimer approximation \cite{BO}. 
The purpose here is to contribute to the current understanding
of such derivations, by showing 
convergence rates under new assumptions.
The precise aim in this paper is to estimate the error 
\begin{equation}\label{approximation}
\int_{\rset^{3N+3J}} g(X) \Phi(x,X)^*\Phi(x,X) dx dX - \lim_{T\rightarrow\infty}T^{-1}\int_0^T g(X_\tau) d\tau
\end{equation}
of a given
position observable $\int g(X) \Phi(x,X)^*\Phi(x,X) dx dX$ 
of the time-indepedent Schr\"odinger equation \eqref{schrodinger_stat}
approximated 
by the corresponding molecular dynamics observable $\lim_{T\rightarrow\infty}T^{-1}\int_0^T g(X_\tau) d\tau$,
which is computationally cheaper to evaluate for several nuclei.

 A useful sub step to derive molecular dynamics  from the Schr\"odinger equation is
{\it Ehrenfest dynamics},  for  classical {\it ab initio} motion of the nuclei
coupled to Schr\"odinger dynamics for the electrons,
\begin{equation}\label{psi_X_eq}
\begin{array}{ll}
M\ddot {X}^n_\tau&= -\int_{\rset^{3J}} 
\phi_\tau^*(x, X_\tau) \,  {\partial_{X^n}V(x,X_\tau)}\, \phi_\tau(x, X_\tau)\ dx \\
i\dot \phi_\tau &=V(\cdot, X_\tau)\phi_\tau, 
\end{array}
\end{equation}
with the initial normalization $\int_{\rset^{3J}}\phi_0^*(x, X_0)\phi_0(x, X_0) dx=1$.
The Ehrenfest dynamics
(\ref{psi_X_eq})  has been derived from the time-dependent 
Schr\"odinger equation through the self consistent field equations, see   \cite{schutte,marx,tully}.
Equation (\ref{psi_X_eq}) can be used for  {\it ab initio}  computation of  molecular dynamics, cf.  \cite{marx,tully2}.
A next step  is the zero-order Born-Oppenheimer approximation, where $X_\tau$ solves  the classical
{\it ab initio} molecular dynamics (\ref{md_eq}) with
the potential $V_p:\rset^{3N}\rightarrow \rset$ determined as an eigenvalue of the electron Hamiltonian
$V(\cdot,X)$ for a given nuclei position $X$, that is  $V_p=\lambda_0$
and $V(\cdot,X)\psi_0(\cdot, X)=\lambda_0(X)\psi_0(\cdot, X)$
for an  electron eigenfunction $\psi_0(\cdot, X)\in L^2(\rset^{3J})$, for instance  the ground state.

The Born-Oppenheimer expansion \cite{BO} is an approximation of the solution to the time-independent\break
Schr\"odinger equation
which is shown \cite{hagedorn_egen,martinez} 
to solve the time-independent Schr\"odinger equation approximately. 
This expansion, analyzed by the methods of multiple scales, pseudo differential operators
and spectral analysis in \cite{hagedorn_egen,martinez, fefferman},
can be used to study the approximation error
\eqref{approximation}, cf. Section \ref{wkb_utan_elektroner}. Although in the literature it seems
easier to find precise statements      
on the error in observables for the other setting of the time-dependent Schr\"odinger equation.  
Instead of an asymptotic expansion,  we use a different method based on 
a Hamiltonian dynamics formulation of the time-independent
Schr\"odinger eigenfunction 
and  the stability of the corresponding perturbed
Hamilton-Jacobi equations viewed as a hitting problem.
Another motivation for our method  is 
that it forms a sub step in trying to  estimate the approximation error
using only information available in molecular dynamics simulations.

The related problem of approximating 
observables to the time-dependent Schr\"odinger by the Born Oppenheimer expansions
is well studied, theoretically \cite{robert, spohn_egorov} and computationally \cite{lasser}
using the Egorov theorem.
The Egorov theorem shows that finite time
observables of the time-dependent Schr\"odinger equation are approximated with $\mathcal{O}(M^{-1})$
accuracy by the zero-order Born-Oppenheimer dynamics, with an electron eigenvalue gap.
In the special case of a position observable and no electrons (i.e. V=V(X) in \eqref{V-definition}), 
the Egorov theorem states that
\begin{equation}\label{egorov}
\Big|\int_{\rset^{3N}} g(X) \Phi(X,t)^*\Phi(X,t) \, dX-
\int_{\rset^{3N}} g(X_t) \Phi(X_0,0)^* \Phi(X_0,0) \, dX_0\Big|
\le C_t M^{-1},
\end{equation}
where $\Phi(X,t)$  is a solution to the time-dependent
Schr\"odinger equation $iM^{-1/2}\partial_t \Phi(\cdot,t)=H\Phi(\cdot,t)$ with the Hamiltonian \eqref{V-definition}
and the path $X_t$ is the nuclei coordinates for the dynamics with the Hamiltonian $|\dot X|^2/2 + V(X)$.
If the initial wave function $\Phi(X,0)$ is the eigenfunction in \eqref{schrodinger_stat}
the first term in \eqref{egorov} reduces to the first term in \eqref{approximation}
and the second terms can also become the same in an ergodic limit; however
since we do not know that the parameter $C_t$ (bounding an integral over $(0,t)$) is bounded for all time
we cannot directly conclude an estimate for \eqref{approximation} from \eqref{egorov}.
%
In this perspective, to study the time-independent instead of the time-dependent Schr\"odinger equation has
the important differences that
\begin{itemize}
\item the infinite time study of  the Born-Oppenheimer dynamics can be reduced
to a finite time hitting problem, 
\item the computational and theoretical problem of specifying initial data for the Schr\"odinger equation is avoided, and
\item computational cheap evaluation of a position observable $g(X)$ 
is possible using the time average $\lim_{T\rightarrow\infty}\int_0^T g(X_\tau)d\tau/T$ along the solution path $X_\tau$.
\end{itemize}

This paper derives the Ehrenfest dynamics (\ref{psi_X_eq}) and the Born-Oppenheimer approximation
from the time-independent Schr\"odinger equation (\ref{schrodinger_stat})
and 
establishes
convergence rates for molecular dynamics approximations
to  time-independent Schr\"odinger observables under simple assumptions excluding so called {\it caustic} points, where
the Jacobian determinant $\det \partial X_t/\partial X_0$ of the Eulerian-Lagrangian transformation of $X-$paths vanish.
As mentioned, the main new analysis idea
is  to write the time-independent Schr\"odinger equation \eqref{schrodinger_stat} as a Hamiltonian system 
and analyze the approximations  by comparing their Hamiltonians with the Schr\"odinger Hamiltonian,
using the theory of Hamilton-Jacobi partial differential equations; the problematic infinite time evolution of
perturbations in  the dynamics
is solved by viewing it as a finite time hitting  problem for the Hamilton-Jacobi equation.
Another difference is we analyze the transport equation as a time-dependent Schr\"odinger equation
in contrast to the traditional rigorous and formal asymptotic expansions.



The main inspiration for this paper
is  \cite{Mott,briggs,briggs2} and the semi-classical WKB analysis in \cite{maslov}:
the work \cite{Mott, briggs,briggs2} 
derives the time-dependent Schr\"odinger dynamics of an $x-$system, 
$
i\dot \Psi=H_1\Psi,
$
from the time-independent Schr\"odinger equation (with the Hamiltonian $H_1(x)+ \delta H(x,X)$)
by a classical limit for the environment variable $X$, as the coupling parameter $\delta$ vanishes
and the mass $M$ tends to infinity;  in particular \cite{Mott, briggs,briggs2} show that the time derivative enters through the
coupling of $\Psi$ with the classical velocity. 
Here we refine
the use of characteristics to study  classical  {\it ab initio} molecular dynamics, where the coupling
does not vanish, and we establish error estimates for Born-Oppenheimer,
Ehrenfest, surface-hopping and stochastic approximations of the Schr\"odinger observables in the
case of no caustics present.
The small scale, introduced by the perturbation  \[
-(2M)^{-1}\sum_n\Delta_{X_n}\] of the potential $V$,
is identified in a modified WKB eikonal equation
and analyzed trough the corresponding transport equation as a time-dependent Schr\"odinger
equation along the eikonal characteristics.
This modified WKB formulation reduces to the standard semi-classical approximation, cf. \cite{maslov},
for the case of a potential function $V(X)\in\rset$ (depending
only on nuclei coordinates) but becomes different in the case of operator valued potentials studied here. 
The global analysis of WKB functions started by Maslov in the 1960'  \cite{maslov} 
and lead to the subject Geometry of Quantization, relating global classical paths to eigenfunctions of
the Schr\"odinger equation, cf. \cite{duistermaat}. The analysis here,
based on a Hamiltonian system interpretation 
of the time-independent Schr\"odinger equation and stability of the corresponding Hamilton-Jacobi equation,
bypasses the usual separation 
of nuclei and electron wave functions in the time-dependent self consistent field  equations \cite{schutte,marx,tully}.

Theorems \ref{thm_ehrenfestapprox} and \ref{bo_thm}  
show that 
observables based on a {\it single WKB
Schr\"odinger eigenstate} are approximated
by observables from the Ehrenfest dynamics and the zero-order Born-Oppenheimer dynamics
with error  $\mathcal{O}(M^{-\alpha})$, 
using that these approximate solutions generate approximate eigenstates to the Schr\"odinger equation:
in the case of no caustics and the electron eigenvalues satisfying a spectral gap condition, there holds 
$\alpha=1-\delta$ for any $\delta>0$;
for non degenerate electron eigenvalue crossings the convergence rate is reduced to $\alpha=1/2-\delta$ for Born-Oppenheimer dynamics and $\alpha=3/4$ for Ehrenfest dynamics,
using the stationary phase method in
Section \ref{sec:bo}.
The results are based on the Hamiltonian (\ref{V-definition}) with any potential $V$ that is smooth in $X$,
e.g. a regularized version of the Coulomb potential (\ref{ham}).
The derivation does not  assume that the nuclei are supported on small domains;
in contrast,
derivations based on the time-dependent self consistent field equations 
require nuclei to be  
supported on small domains.
The reason that small support is not needed here comes from the combination
of the characteristics and sampling from an equilibrium density, 
that is,  the nuclei paths behave classically although they may not be supported on small domains,
in the case of no caustics. Section \ref{sec:observables} shows that caustics couple the WKB modes,
as is well known from geometric optics, see \cite{keller,maslov}, and generate non orthogonal
WKB modes that are coupled in the Schr\"odinger density;  otherwise, without caustics the Schr\"odinger
density is asymptotically decoupled into a simple sum of individual WKB densities.
%
%
Remark \ref{md_sim} relates the approximation results to 
the accuracy of symplectic numerical methods for molecular dynamics.

Section \ref{sec:time-depend} shows that the Ehrenfest 
dynamics is formally the same when derived from the time-independent
and time-dependent Schr\"odinger equations; 
a unique property of the time-independent Schr\"odinger equation we use
is the interpretation of its dynamics $X_t\in \rset^{3N}$ returning to a co-dimension one 
surface $I$ and thereby reducing the dynamics to a hitting time problem
with finite time excursions from  $I$.
Another advantage with molecular dynamics approximating  an
eigenvalue, is that stochastic perturbations of the electron ground state
can be interpreted as a Gibbs distribution of degenerate nuclei-electron eigenstates of the
Schr\"odinger eigenvalue problem (\ref{schrodinger_stat}).
The time-independent eigenvalue setting 
also avoids  the issue on "wave function collapse" to an eigenstate, present in the time-dependent Schr\"odinger equation.

The model (\ref{md_eq}) simulates dynamics
at constant energy $M|\dot X|^2/2+ V_p(X)$, constant number of particles $N$ and constant volume, i.e.
the microcanonical ensemble. The alternative to simulate with constant number of particles, constant volume
and constant {\it temperature $T$}, i.e. the canonical ensemble, is possible for instance with the stochastic
{\it Langevin dynamics}
\begin{equation}\label{lang-1}
\begin{array}{ll}
dX_\tau &= v_\tau d\tau\\
Mdv_\tau &= -\partial_X V_p(X_t)d\tau - Kv_t d\tau + (2TK)^{1/2} dW_\tau,
\end{array}
\end{equation} 
where $W_\tau$ is  the standard Brownian process (at time $\tau$)
in $\rset^{3N}$ with independent components and $K$ is a positive friction parameter. 
When the observable only depends on the nuclei positions,
i.e. not on the nuclei velocities or the correlation of positions at different times,
the {\it Smoluchowski dynamics} 
\begin{equation}\label{smoluchowski-1}
dX_\tau=- \partial_XV_p(X_\tau) + (2T)^{1/2} dW_\tau
\end{equation}
is a simplified alternative to Langevin dynamics, cf. \cite{cances}.

Section \ref{sec:s-ham} shows that the Schr\"odinger eigenvalue problem
can be written as a Hamiltonian molecular dynamics system and that
the equilibrium Gibbs distribution of the
Ehrenfest Hamiltonian system approximates the
Gibbs distribution of the Hamiltonian Schr\"odinger dynamics 
with accuracy $\mathcal O(M^{-1})$ for a spectral gap case.
Theorem \ref{thm_md_stok}
shows that Langevin and Smoluchowski dynamics accurately approximate
observables  of Schr\"odinger and Ehrenfest Gibbs equilibrium dynamics,
when the observable depends only on the nuclei positions but not their correlation at different time.
The derivation  uses a classical
equilibrium Gibbs distribution, based on a probability distribution to be in
any WKB eigenstate (corresponding to a degenerate eigenvalue of the Schr\"odinger equation (\ref{schrodinger_stat})),
randomly  perturbed from the electron ground state;
the derivation uses an other assumption of a
spectral gap and no caustics.
The main idea in the theorem is 
the formulation of a {\it classical} Gibbs equilibrium distribution of eigenstates, motivated by
nuclei acting as heat bath for the electrons in the {\it quantum} Ehrenfest and Schr\"odinger Hamiltonian systems. 
Theorem \ref{thm3} shows that stochastic perturbations of the ground state
can also generate a temperature
dependent contribution to the  drift, depending on a spectral gap of the electron eigenvalues.

I hope that these ideas can be further developed to better understand molecular dynamics
simulations, for instance
\begin{itemize}
\item[--] I assume that the electron eigenfunction $\psi_0$ is smooth enough as a function of $X$, which holds 
when the potential $V$ is smooth in $X$ -- to find a possibly reduced
convergence rate in the  Coulomb case (\ref{ham})  will require a more careful study,
\item[--] it would be desirable to understand
the effect of caustics, as it is for systems without electrons (or with electrons as heavy as the nuclei)
in the so called semi-classical limit, cf. \cite{maslov}.
\end{itemize}

We use the notation $\psi(x,X)=\mathcal O(M^{-\alpha})$
also for complex valued functions, meaning that
$|\psi(x,X)|=\mathcal O(M^{-\alpha})$ holds uniformly in $x$ and $X$.

\section{Ehrenfest dynamics derived from the time-independent Schr\"odinger equation}
\subsection{Exact Schr\"odinger dynamics}
Assume for simplicity that all nuclei have the same mass.\footnote{If this is not the case, change to new coordinates
$M_1^{1/2}\tilde X^k=M^{1/2}_kX^k$, which transform the Hamiltonian to the form we want
${V(x,M_1^{1/2}M^{-1/2}\tilde X)}- (2M_1)^{-1}\sum_{n=1}^N\Delta_{\tilde X^n}$.}
The singular perturbation $-(2M)^{-1}\sum_n\Delta_{X_n}$ of the potential $V$ introduces
an additional  small scale $M^{-1/2}$ of high frequency oscillations,
as shown by a WKB-expansion, see   \cite{Rayleigh,jeffreys,helfer,sjostrand}.
We will construct  solutions to (\ref{schrodinger_stat}) in  such WKB-form 
\begin{equation}\label{wkb_form}
\Phi(x,X)=\psi(x,X)e^{iM^{1/2}\theta(X)},
\end{equation}
where the wave function $\psi$ is complex valued, the phase $\theta$ is real valued and the factor $M^{1/2}$
is introduced to have  well defined limits of $\psi$ and $\theta$  as  $M\rightarrow\infty$.
The standard WKB-construction  \cite{maslov,helfer} is
based on a series expansion in powers of $M^{1/2}$ which solves
the Schr\"odinger equation with arbitrary high accuracy. We introduce  instead of an asymptotic solution
an actual solution based on
a time-dependent Schr\"odinger transport equation. This transport equation
reduces to the formulation
in \cite{maslov} for the case of a potential function
$V(X)\in \rset$ (depending only on nuclei coordinates $X\in\rset^{3N}$)
and modifies it for the case of a self-adjoint potential operator $V(\cdot,X)$ on the electron space $L^2(\rset^{3J})$ focused on here; the second difference is that we analyze the transport equation
as a time-dependent Schr\"odinger equation instead of as an asymptotic expansion.
In the next section we  use a linear combination of such eigensolutions. 
The WKB-solution satisfies the Schr\"odinger equation (\ref{schrodinger_stat}) provided that
\begin{equation}\label{wkb_eq}
\begin{array}{ll}
0&=(H-E)\psi e^{iM^{1/2}\theta(X)}\\
&=\Big( ( \frac{|\partial_X\theta|^2}{2} + V -E) \psi\\
&\quad - \frac{1}{2M}\sum_j \Delta_{X^j}\psi
- \frac{i}{M^{1/2}}  \sum_j (\partial_{X^j} \psi\partial_{X^j}\theta
+\frac{1}{2}\psi\partial_{X^jX^j}\theta)\Big)e^{iM^{1/2}\theta(X)}\, .
\end{array}
\end{equation}
We will see that only eigensolutions $\Phi$ that correspond to
dynamics without caustics correspond to such a single WKB-mode,
as for instance when the eigenvalue $E$  is inside an electron eigenvalue gap. 
Introduce the complex-valued scalar product 
\[
v\cdot w:=\int_{\rset^{3J}} v(x,\cdot )^*w(x,\cdot) dx\equiv \langle v |w\rangle\]
on $L^2(\rset^{3J})$ and the notation $X\bullet Y$ for the standard
scalar product on $R^{3N}$.
To find an equation for $\theta$, multiply (\ref{wkb_eq}) by $\psi^* e^{-iM^{1/2}\theta(X)}$ and integrate over
$\rset^{3J}$; similarly take the complex conjugate of (\ref{wkb_eq}), multiply by $\psi e^{iM^{1/2}\theta(X)}$
and integrate over $\rset^{3J}$; and finally add the two expressions to obtain
\begin{equation}\label{theta_ekvation}
\begin{array}{ll}
0&=2\big(  \frac{|\partial_X\theta|^2}{2}  -E\big)\ \psi\cdot\psi
+\underbrace{\psi\cdot V\psi + V\psi\cdot \psi}_{=2\psi\cdot V\psi}  \\
&\quad -\frac{1}{2M}\Big( \psi\cdot (\sum_j \Delta_{X^j}\psi)
+ (\sum_j \Delta_{X^j}\psi)\cdot\psi\Big)\\
&\quad -\frac{i}{M^{1/2}}\Big(
\underbrace{\psi\cdot (\partial_{X} \psi\bullet \partial_{X}\theta)
-(\partial_{X} \psi\bullet \partial_{X}\theta)\cdot\psi}_{
=2i\Im\big(\psi \cdot(\partial_X\psi\bullet\partial_X\theta)\big)}\Big)\\
&\quad +\frac{i}{2M^{1/2}}\Big( \underbrace{(\psi\cdot\psi- \psi\cdot\psi)}_{=0}
\sum_j \partial_{X^jX^j}\theta\Big).
\end{array}
\end{equation}
The purpose of the phase function $\theta$ is to generate an accurate
approximation in the limit as $M\rightarrow\infty$: therefore we define $\theta$ by the formal limit of 
(\ref{theta_ekvation}) as $M\rightarrow\infty$, which is
the {\it Hamilton-Jacobi equation} (also called the {\it eikonal equation})
\begin{equation}\label{theta_eq}
\begin{array}{ll}
\frac{|\partial_X\theta|^2}{2} 
&=E-V_0,\\
\end{array}
\end{equation}
where the function $V_0:\rset^{3N}\rightarrow \rset$ is  
\[
V_0:=\frac{\psi\cdot V\psi}{\psi\cdot \psi}.
\]
This Eikonal equation \eqref{theta_eq} reduces to the standard formulation
\begin{equation}\label{maslov_eikon}
(\frac{|\partial_X\theta|^2}{2}   -E+V)\psi=0,
\end{equation}
cf.   \cite{maslov}, 
when the potential function $V(X)\in \rset$ depends only on nuclei 
coordinates $X\in\rset^{3N}$
but \eqref{theta_eq} is different  for the case with  an operator valued 
self-adjoint potential $V(\cdot,X)$ on the electron space $L^2(\rset^{3J})$;
in particular the standard condition \eqref{maslov_eikon} requires  $\psi$ to be an eigenfunction of $V$.
The reason we use the Eikonal equation \eqref{theta_eq} instead of \eqref{maslov_eikon}
is that the corresponding transport equation to leading order becomes
a variant of the Ehrenfest Hamiltonian dynamics \eqref{psi_X_eq}, as shown in the remainder of this section.
This section is also a step towards the more basic construction
in Section \ref{sec:s-ham}, were we modify $V_0$  in \eqref{v_0-def} with an asymptotically negligible  term,
to write the time-independent Sch\"odinger equation as
a Hamiltonian system, so that the combination of 
the Eikonal equation and the corresponding transport equation
forms a  Hamiltonian system.

For the energy $E$ chosen larger than the potential energy, that is such that $E\ge V_0$,  
the method of {\it characteristics}, cf. \cite{evans},
\begin{equation}\label{char_eq}
\begin{array}{ll}
\frac{d X_t}{dt} &= \partial_X\theta(X_t)=: p_t,\\
\frac{d p_t}{dt} &= -\partial_XV_0( X_t),\\
\frac{d z_t}{dt} &= | p_t|^2=2\big(E-V_0(X_t)\big),\\
\end{array}
\end{equation}
yields a solution $(X,p,z): [0,T]\rightarrow U\times\rset^{3N}\times\rset$ to the eikonal equation 
(\ref{theta_eq}) locally  in a neighborhood  $U\subseteq \rset^{3N}$,  for 
regular compatible data $( X_0,p_0, z_0)$ given on a $3N-1$ dimensional "inflow"-domain 
$ I\subset \overline{U}$;
here $ z_t:=\theta(X_t)$.
Typically the domain $I$ and the data $\theta|_I$ are not given, unless it is really an
inflow domain and characteristic paths do not return to $I$ as in a scattering problem.
If paths leaving from $I$ return to $I$, there is an additional compatibility
of data on $I$: we have $z_0= - \int_0^t |p^s|^2ds+ z_t$, where $X_0\in I$ and $X_t\in I$,
so that $z_t=\theta(X_t)$  and $p_t=\partial_X\theta(X_t)$ are determined from $z_0=\theta(X_0)$
and $p_0=\partial_X\theta(X_0)$,
and continuing the path to subsequent hitting points $X_{t_j}\in I, \ j=1,2,\ldots$ determines 
$\big(\theta(X_{t_j}),\partial_X\theta(X_{t_j})\big)$
from $\big(\theta(X_0),\partial_X\theta(X_{0})\big)$. 
Our derivation of approximation error will use such a WKB Ansatz for the approximate Ehrenfest solution, the Born-Oppenheimer solution and for the exact Schr\"odinger solution $\Phi$. 
As we shall see, the Ehrenfest and Born-Oppenheimer approximations have related (simpler) equations for
its characteristics.

The phase function $\theta:U\rightarrow \rset$ becomes  globally defined  in $U\subset \rset^{3N}$ 
when there is a unique characteristic path $ X_t$ going through each point in $U$. 
A globally defined wave function $\Phi$ can be constructed from a linear combination of WKB functions 
also  when caustics are present, using  the manifold of phase-space solutions $(X,p)$ and Fourier integral operators
to relate  $X$ and $p$ dependence, 
see \cite{maslov} and \cite{duistermaat}.
We assume  in this work that characteristics do not collide
to avoid multi valued solutions. 
A simple case without caustics is when the potential is such that $\min_{X\in\rset}(E-V_0(X))>0$, in one dimension.
Section \ref{supos} presents approximations with a linear combination of WKB functions related to so called
surface-hopping.

Definition (\ref{theta_eq}) and equation \eqref{wkb_eq}
 imply that $\psi$ solves  the so called {\it transport equation}
\begin{equation}\label{psi_eq}
-\frac{1}{2M}\sum_j \Delta_{X^j}\psi
+(V-V_0)\psi= \frac{i}{M^{1/2}} \sum_j \big(\partial_{X^j}\psi\partial_{X^j}\theta 
+ \frac{1}{2} \partial_{X^jX^j} \theta\psi\big).
\end{equation}
Time enters into the Schr\"odinger equation through the characteristics and the chain rule
\[
\partial_X\psi \bullet \partial_X \theta= \partial_X\psi \bullet  \frac{d X_t}{dt}=\frac{d\psi(x,X_t)}{dt},
\]
cf. \cite{Mott,briggs,briggs2}.
The second term in the right hand side of (\ref{psi_eq})
can be simplified by the scalar integrating factor 
\begin{equation}\label{G_definition}
 G_t:=e^{\tilde G_t},
 \end{equation}
defined along the characteristics from
 \begin{equation}\label{s_int}
 \frac{d \tilde G_t}{dt}:=\frac{1}{2}\sum_j\partial_{X^jX^j}\theta(X_t).
 \end{equation}
The integrating factor $G_t$, defined by \eqref{G_definition} and \eqref{s_int},
gives 
 \[
 \begin{array}{ll}
 \sum_j \big(\partial_{X^j}\psi\partial_{X^j}\theta 
+ \frac{1}{2} \partial_{X^jX^j} \theta\psi\big)
&=\frac{d\psi}{dt} +\frac{\psi}{G_t}\frac{dG_t}{dt}\\
&=\frac{1}{G_t}\frac{d(G_t\psi)}{dt}.\\
\end{array}
\]
We can also define  a function $G:\rset^{3N}\rightarrow \rset$ by identifying $G(X_t):=G_t$.
This step, with the integrating factor $G$,
 differs from  \cite{Mott,briggs,briggs2}, which approximates the last term in (\ref{psi_eq}),
\[
\sum_j\partial_{X^jX^j}\theta\psi,
\]
 by zero in their case of vanishing coupling between the quantum system and the environment; here the
 coupling between the nuclei and electrons does not vanish.
The right hand side in \eqref{psi_eq} becomes the time derivative
$iM^{-1/2}G^{-1}\ d(G\psi)/dt$ and we have derived the {\it time-dependent  Schr\"odinger} equation, for the
variable $\tilde\psi:=G\psi$,
\begin{equation}\label{psi_first_eq}
\begin{array}{ll}
0&=(H-E)\Phi\\
&=\Big(\big(-\frac{i}{M^{1/2}}\dot{ \tilde\psi}+ (V-V_0)\tilde\psi 
-\frac{1}{2M}\sum_j G\Delta_{X^j}(\tilde\psi G^{-1})\big) G^{-1}\\
&\quad
+\underbrace{\big( \frac{|\partial_X\theta|^2}{2} + V_0 -E\big)}_{=0} \psi\Big)e^{iM^{1/2}\theta(X)}\, .
\end{array}
\end{equation}
The density can be written
\begin{equation}\label{rho_definition}
\rho:= \int_{\rset^{3J}} |\psi|^2 dx=\int_{\rset^{3J}} |\tilde\psi|^2 dx/ G^2
\end{equation}
and therefore the second equation in \eqref{char_eq} yields the nuclei dynamics 
\[
\ddot{ X}= - \partial_X\frac{\tilde \psi\cdot V \tilde \psi}{ {\tilde\psi\cdot \tilde\psi}}.
\]
The weight function $G^2_t$  equals the determinant of the first variation $\partial X_t/\partial X_0$ modulo a
constant
\begin{equation}\label{euler_lagrange-det}
G^2_t/G^2_0=\mbox{det} (\partial X_t/\partial X_0),
\end{equation}
which follows from Liouville's  formula,
see
\cite{maslov},
and in one dimension $G^2_t/G_0^2=p_t/p_0$.

In conclusion, we have rewritten a WKB solution of the time-independent Schr\"odinger
equation in the form of  the exact {\it Schr\"odinger dynamics} along the characteristics
of the Eikonal equation and the transport equation
\begin{equation}\label{tilde_psi_eq}
\begin{array}{ll}
&\frac{i}{M^{1/2}}\dot{\tilde{\psi}} =(V-V_0)\tilde\psi-\frac{1}{2M}\sum_j G\Delta_{X^j}(\tilde\psi/G),\\
&\ddot{X} = -  \partial_X\frac{\tilde \psi\cdot V\tilde \psi}{ {\tilde\psi\cdot \tilde\psi}}\, .
\end{array}
\end{equation}
The eigenvalue $E$ is a parameter in the Hamiltonian for the characteristics  
of the Hamilton-Jacobi equation \eqref{theta_eq}.
 In the case when no electrons are present, we have $V=V_0$, and equation \eqref{tilde_psi_eq} 
 was derived from (\ref{schrodinger_stat}-\ref{V-definition}) and \eqref{wkb_form} in \cite{maslov}.
The integrating factor $G$ and its derivative $\partial_XG$ can be determined from $(p,\partial_Xp,\partial_{XX}p)$
along the characteristics by
the following characteristic equations obtained from $X$-differentiation of \eqref{theta_eq}
\begin{equation}\label{p_xx_eq}
\begin{array}{ll}
\frac{d}{dt}{\partial_{X^r} p^k}& =\Big[ \sum_j p^j\partial_{X^jX^r}p^k=\sum_jp^j\partial_{X^rX^k}p^j\Big]\\
&=  -\sum_j\partial_{X^r} p^j\partial_{X^k} p^j -\partial_{X^rX^k}V_0,\\
\frac{d}{dt}{\partial_{X^rX^q} p^k}&= \Big[\sum_j p^j\partial_{X^jX^rX^q}p^k=\sum_jp^j\partial_{X^rX^kX^q}p^j\Big]\\
&=  -\sum_j\partial_{X^r} p^j\partial_{X^kX^q} p^j 
-\sum_j\partial_{X^rX^q} p^j\partial_{X^k} p^j -\partial_{X^rX^kX^q}V_0,\\
\end{array}
\end{equation}
and similarly $\partial_{XX}G$ can be determined from  $(p,\partial_Xp,\partial_{XX}p,\partial_{XXX}p)$.

\subsection{Approximate Ehrenfest dynamics and densities}
We define the approximating {\it Ehrenfest dynamics} by  in \eqref{wkb_eq}
neglecting  the term $(2M)^{-1} \sum_j\Delta_{X^j}\psi$:
\begin{equation}\label{hat_theta_eq}
\begin{array}{ll}
0=\Big( \frac{|\partial_X\hat\theta|^2}{2} + V -E\Big) \check\psi
 - \frac{i}{M^{1/2}}  \sum_j (\partial_{X^j} \check\psi\partial_{X^j}\hat\theta
+\frac{1}{2}\check\psi\partial_{X^jX^j}\hat\theta),
\end{array}
\end{equation}
and seek, as in \eqref{theta_eq}, the approximate  phase $\hat\theta$
as the solution to eikonal equation
\begin{equation}\label{hj_classic}
 \frac{|\partial_X\hat\theta|^2}{2} =E- \hat V_0,
\end{equation}
where 
\[
\hat V_0:=\frac{\check\psi\cdot V\check \psi}{\check\psi\cdot \check \psi}.
\]
Introduce its characteristics 
\[
\begin{array}{ll}
\frac{d\hat X_t}{dt} &= \partial_X\hat\theta(X_t)=:\hat p_t,\\
\frac{d\hat p_t}{dt} &= -\partial_X\hat V_0(\hat X_t),\\
\frac{d\hat z_t}{dt} &= |\hat p_t|^2=2\big(E-\hat V_0(\hat X_t)\big),\\
\end{array}
\]
to rewrite \eqref{hat_theta_eq}, as in \eqref{psi_first_eq},
\begin{equation}\label{ehrenfest_dyn}
\begin{array}{ll}
0&=-\big(\frac{i}{M^{1/2}}\dot{\hat\psi} -(V-\hat V_0)\hat\psi\big)\hat G^{-1} 
+ \underbrace{\big( \frac{|\partial_X\hat\theta|^2}{2} + \hat V_0 -E\big)}_{=0} \check\psi ,\\
\ddot{\hat X} &= -\partial_X \Big(\frac{\hat\psi \cdot V\hat\psi}{\hat\psi\cdot \hat\psi}\Big),
\end{array}
\end{equation}
for $\hat\psi:=\hat G\check\psi$  approximating $\tilde\psi$ and  
\begin{equation}\label{G_const}
\hat G_t:= Ce^{\int_0^t2^{-1}\sum_j\partial_{X^j}\hat p^j_s ds}\end{equation}
as in \eqref{s_int} (where $C$ is a positive constant for each characteristic).
Using that  $\hat\psi\cdot \hat\psi$ is conserved (i.e. time-independent)
in the Ehrenfest dynamics, we can normalize to $\hat\psi\cdot\hat\psi=1$.
 Note that in the exact dynamics, the function $\tilde\psi\cdot\tilde\psi$ is not conserved,
due to the  $L^2(\rset^{3J})$ non symmetric source term $\frac{1}{2M}\sum_j G\Delta_{X^j}(\tilde\psi/G)$ in \eqref{tilde_psi_eq}.
We have
\begin{equation}\label{v0-def}
V_0= \frac{\psi\cdot V\psi}{\psi\cdot\psi}
= \frac{\tilde\psi\cdot V\tilde\psi}{\tilde\psi\cdot\tilde\psi}
\end{equation}
and
\[
\hat V_0 =\frac{\hat\psi\cdot V\hat\psi}{\hat\psi\cdot\hat\psi}=\hat\psi\cdot V\hat\psi.
\]

\subsection{Comparison of two alternative Ehrenfest formulations}\label{ehrenfest_comparison}
The two different Ehrenfest dynamics \eqref{ehrenfest_dyn} respectively  \eqref{psi_X_eq}  
differ in: 
\begin{itemize}
\item[(1)] the different time scales  $t$ (slow) respectively  $M^{1/2}t=:\tau$ (fast); 
\item[(2)] the potentials $V-\hat V_0$ and  $V$ 
in the equations for $\hat\psi$ and $\phi$, respectively; and
\item[(3)] the forces $\partial_X(\hat\psi\cdot V\hat\psi)$ respectively $\phi\cdot \partial_XV\phi$
in the momentum equation.
\end{itemize}
If $\hat\psi$ solves \eqref{ehrenfest_dyn}, the change of variables  
\[
\hat\phi=\hat\psi e^{-iM^{1/2}\int_0^t\hat\psi\cdot V\hat\psi(X_s)ds}\]
and the property  $\hat\psi\cdot A\hat\psi =\hat\phi\cdot A\hat\phi $, which holds for observables $A$ 
not including the nuclei momentum $i\partial_X$
(in particular for $A=V$),
imply that $\hat\phi$ solves 
\begin{equation}\label{transform_phi}
\begin{array}{ll}
\frac{i}{M^{1/2}}\dot{\hat\phi}& = V\hat\phi,\\
\\
\ddot{\hat X} &= -\partial_X \Big(\frac{\hat\phi \cdot V\hat\phi}{\hat\phi\cdot \hat\phi}\Big)
=\partial_X (\hat\phi \cdot V\hat\phi).
\end{array}
\end{equation}

There has been a discussion in the literature \cite{schutte,tully} 
whether the forces should be computed as above in \eqref{transform_phi} or
as in \eqref{psi_X_eq} by
\begin{equation}\label{hj-system}
\begin{array}{ll}
M\ddot {X}^n_\tau&= -\int_{\rset^{3J}} 
\phi^*(\cdot, X_\tau) \,  {\partial_{X^n}V(\cdot,X_\tau)}\, \phi(\cdot, X_\tau)\ dx \\
i\dot \phi_\tau &=V(\cdot, X_\tau)\phi_\tau.
\end{array}
\end{equation}
Theorems \ref{thm_ehrenfestapprox}  and \ref{bo_thm} below  show that  both  formulations \eqref{transform_phi}  
and (\ref{hj-system}, \ref{psi_X_eq}) yield 
accurate approximations of the Schr\"odinger observables, although they are not the same and (\ref{hj-system}, \ref{psi_X_eq}) is closer to the Schr\"odinger observables:
the reason that both approximations are accurate is that 
 (\ref{hj-system}, \ref{psi_X_eq}) forms a Hamiltonian system (as explained below and in Section \ref{sec_hj}) 
 which is close the 
 Hamiltonian system for the Schr\"odinger equation, see Theorem  \ref{thm_ehrenfestapprox}, while
 the formulation \eqref{transform_phi} can be viewed as a Hamiltonian system approximating the Born-Oppenheimer dynamics \eqref{md_eq}, using that the wave function $\check\varphi$ is close to an electron eigenfunction (see Section \ref{sec:bo}),
and the Born-Oppenheimer dynamics approximates Schr\"odinger observables, see Theorem \ref{bo_thm}.


The formulation  \eqref{psi_X_eq} and \eqref{hj-system}
has the advantage
to be a closed Hamiltonian system:   the variable $(X,\varphi_r;p,\varphi_i)$, with the definition
\[
\begin{array}{ll}
\varphi&:=\varphi_r+ i\varphi_i:= 2^{1/2}M^{-1/4}\phi= 2^{1/2}M^{-1/4}(\phi_r+i\phi_i),\\
\partial_{\varphi_r}\tilde\theta&=: \varphi_i,
\end{array}
\]
and the {\it Ansatz}  $\tilde\theta=\hat\theta$ imply that the  Hamilton-Jacobi equation \eqref{hj_classic} becomes
\begin{equation}\label{E-hamiltonian}
\begin{array}{ll}
H_E&:=\frac{1}{2}\partial_X\tilde\theta\bullet\partial_X\tilde\theta 
+ \phi_r\cdot V(X)\phi_r + \phi_i\cdot V(X)\phi_i \\
&=\frac{1}{2}\partial_X\tilde\theta\bullet\partial_X\tilde\theta 
+ 2^{-1}M^{1/2} \varphi_r\cdot V(X)\varphi_r + 2^{-1}M^{1/2}\varphi_i\cdot V(X)\varphi_i\\
&=
\frac{1}{2}\partial_X\tilde\theta\bullet\partial_X\tilde\theta 
+ 2^{-1}M^{1/2} \varphi_r\cdot V(X)\varphi_r 
+ 2^{-1}M^{1/2} \partial_{\varphi_r}\tilde\theta \cdot V(X)\partial_{\varphi_r}\tilde\theta\\
&=E,
\end{array}
\end{equation}
where the derivative 
$\partial_{\varphi_r}\tilde\theta(X,\varphi_r)=\varphi_i$, 
of the functional $\tilde\theta:\rset^{3N}\times L^2(dx)\rightarrow\rset$,
is the Gateaux derivative in $L^2(dx)$.
Its characteristics form
the { Hamiltonian system}
\[
\begin{array}{ll}
\dot X_t &=p_t\\
\dot p_t &=  -\frac{M^{1/2}}{2} \varphi(t)\cdot \partial_X V(X_t)\varphi(t)\\
\dot \varphi_r(t) &=  {M^{1/2}}V(X_t)\varphi_i(t)\\
\dot \varphi_i(t) &=  -{M^{1/2}} V(X_t)\varphi_r(t),\\
\end{array}
\] which is the same as the {\it Hamiltonian system} \eqref{hj-system}
\begin{equation}\label{ehrenfest_ham}
\begin{array}{ll}
\dot{X}_t &=p_t\\
\dot{p}_t &=   -\phi_t\cdot \partial_X V(X_t)\phi_t\\
\frac{i}{M^{1/2}}\dot{\phi}_t &=   V(X_t)\phi_t.\\
\end{array}
\end{equation}
The Hamiltonian system yields the  equation for the phase
\[
\dot{\tilde \theta}= \partial_X\tilde\theta\bullet \dot X+ \partial_{\varphi_r}\tilde\theta\cdot \dot\varphi_r
= p\bullet p + 2 \phi_i\cdot V\phi_i 
=2(E- \phi_r\cdot V\phi_r),
\]
since $\tilde\theta=\tilde\theta(X,\varphi_r)$ is a function of both $X$ and $\varphi_r$ in this formulation.
The important property of this Hamiltonian dynamics is that $(X,p,\hat\psi)$, 
with \[\hat\psi=\phi e^{iM^{1/2}\int_0^t \phi\cdot V\phi(s) ds},\] solves both
the Hamilton-Jacobi equation \eqref{hj_classic} and 
\eqref{hat_theta_eq}, 
which leads to an approximate solution of the Schr\"odinger eigenvalue problem in 
\eqref{ehren_dynamik}-\eqref{h-e-approx}.

The alternative \eqref{transform_phi} does not form a closed system, in the sense that
the required function $\partial_X\psi(X_t)$ is not explicitly determined along the characteristics, but
can be obtained from values of $\psi$, in a neighborhood of $X_t$, by differentiation.

\subsection{Equations for the density}
We note that
\[
\begin{array}{ll}
\psi&=\Big(\frac{\rho}{\tilde\psi\cdot\tilde\psi}\Big)^{1/2}\, \tilde\psi,\\
\check\psi&=\Big(\frac{\hat\rho}{\hat\psi\cdot\hat\psi}\Big)^{1/2}\, \hat\psi,
\end{array}
\]
shows that 
the densities $\rho=\psi\cdot\psi$ (defined in \eqref{rho_definition})
and $\hat\rho:=\check\psi\cdot\check\psi$,
in addition to $(X,p,\tilde\psi)$ and  $(\hat X,\hat p,\hat\psi)$,
are needed
to determine the wave functions $\psi$ and $\check\psi $.

 Equation \eqref{wkb_eq} 
 and the projections in \eqref{theta_ekvation} subtracted (instead of added) 
imply that the density $\rho$ satisfies
\[
\begin{array}{ll}
-M^{-1/2}\sum_j \Im(\psi\cdot \Delta_{X^j}\psi)&=\sum_j\int_{\rset^{3J}} (\partial_{X^j}\psi^*\psi 
+ \psi^*\partial_{X^j}\psi)dx\ \partial_{X^j}\theta 
+\int_{\rset^{3J}}\psi^*\psi dx\ \partial_{X^jX^j}\theta\\
&= \sum_j\partial_{X^j}(\rho\partial_{X^j}\theta),
\end{array}
\]
where $\Im w$ denotes the imaginary part of $w$,
and consequently, the density can be determined along a characteristic
using \eqref{s_int} and \eqref{p_xx_eq}
\begin{equation}\label{density_eq}
\begin{array}{ll}
\dot{\rho}( X_t)&=\sum_j\partial_{X^j}\rho(X_t)\dot{X}^j\\
&=\sum_j\partial_{ X^j}\rho( X_t)\partial_{\hat X^j}\theta\\
&= -\rho(X_t)\sum_j\partial_{ X^j X^j}\theta 
-M^{-1/2}\sum_j \Im(\psi\cdot \Delta_{X^j}\psi)\\
&= -\rho( X_t)\ \mbox{div} -M^{-1/2}\sum_j \Im(\psi\cdot \Delta_{X^j}\psi)\,  p\\
&= -\rho( X_t)\frac{d}{dt}\log  G_t^2-M^{-1/2}\sum_j \Im(\psi\cdot \Delta_{X^j}\psi).
\end{array}
\end{equation}


Similarly, the Ehrenfest  density satisfies
the conservation of mass
\[
0 
= \sum_j\partial_{\hat X^j}(\hat\rho\partial_{\hat X^j}\hat\theta)
\]
so that
\begin{equation}\label{approx_dens}
\begin{array}{ll}
\dot{\hat\rho}( \hat X_t)&=\sum_j\partial_{\hat X^j}\hat\rho(\hat X_t)\dot{\hat X}^j\\
&= -\hat\rho( \hat X_t) \frac{d}{dt}\log \hat G_t^2 
\end{array}
\end{equation}
with the solution
\begin{equation}\label{density_fact}
\hat\rho(\hat X_t)=\frac{C}{\hat G_t^2},
\end{equation}
where $C$ is a positive  constant for each characteristic. Note that the derivation of this classical density 
does not need a corresponding WKB equation but
uses only the conservation of mass 
that holds for classical paths.
The classical density corresponds precisely to
the Eulerian-Lagrangian change of coordinates $\hat G^2_t/\hat G^2_0=\mbox{det} (\partial \hat X_t/\partial \hat X_0)$
in \eqref{euler_lagrange-det}.
Different characteristic paths 
$\hat X$
may have different densities when a path does not visit the whole configuration space $\rset^{3N}$.
The density from the Schr\"odinger equation is therefore important to weight  different paths.

\subsubsection{ The $\rho-\hat \rho$ error without electrons.} \label{wkb_utan_elektroner}
Comparing \eqref{density_eq} and \eqref{approx_dens}, 
the difference $\rho-\hat\rho$ has contributions both from
$G-\hat G$ and from the 
error term $M^{-1/2}\sum_j \Im(\psi\cdot \Delta_{X^j}\psi)$.
In this section we show heuristically how the characteristics can be used 
to estimate the difference $\rho-\hat\rho$, leading to
 $\mathcal{O}(M^{-1})$ accurate Ehrenfest
approximations of Schr\"odinger observables
\[
\int g(X)\underbrace{\rho(X)}_{\Phi\cdot\Phi}dX=\int g(X)\hat\rho(X)dX +\mathcal{O}(M^{-1}),
\]
in the well known case of no electrons present; Section \ref{sec:bo} extends this derivation to the case including electrons.

In the special case of no electrons,  the $X$ dynamics does not depend on $\tilde\psi$
and therefore $\hat X=X$ and consequently $\hat G=G$.
The difference $\tilde\psi-\hat\psi $ can be
understood from iterative approximations of \eqref{tilde_psi_eq}
\begin{equation}\label{hat_noel}
\frac{i}{M^{1/2}} \frac{d}{dt}{\hat\psi}_{n+1} -(V-\hat V_0)\hat\psi_{n+1}=
\frac{1}{2M} G\sum_j\Delta_{X^j}(G^{-1}\hat\psi_n)
\end{equation}
with $\hat\psi_0=0$.
Then $\hat\psi_1=\hat\psi$ is the Ehrenfest approximation
and formally we have the iterations approaching the
full Schr\"odinger solution $\hat\psi_n\rightarrow \tilde\psi$ as $n\rightarrow \infty$.

In the special case of no electrons, there holds $V=\hat V_0$, so the transport equation
$i\dot{\hat{\psi}}_1=0$ has constant solutions; let $\hat\psi_1=1$.
Then $\hat\psi_2-\hat\psi_1$ is imaginary with its absolute value bounded by $\mathcal{O}(M^{-1/2})$;
write the iterations of $\hat \psi_n$ by integrating \eqref{hat_noel} as the linear mapping
\[
\hat\psi_{n+1}=1 +iM^{-1/2}S(\hat\psi_n)
=\sum_{k=0}^{n} i^kM^{-k/2}S^k(\hat\psi_1)
\]
which formally shows that 
\[
|\tilde\psi |^2=|\hat\psi_1|^2 + 2\Re\big((\tilde\psi-\hat\psi_1)\cdot \hat\psi_1) + |\tilde\psi-\hat\psi_1|^2
=1+\mathcal{O}(M^{-1}).
\]
Consequently this special Ehrenfest density satisfies 
\begin{equation}\label{g_dens}
\hat\rho=\hat G^{-2}=\underbrace{G^{-2} \tilde\psi\cdot\tilde\psi}_{=\rho}+\mathcal{O}(M^{-1}),
\end{equation}
since $\hat G=G$ and  $X$ do not depend on $\tilde\psi$. 
 In the general case with electrons,  we show in Section \ref{sec:bo} 
 that there is a solution $\tilde\psi$ which is $\mathcal{O}(M^{-1/2})$
close in $L^2(dx)$ to an electron eigenfunction $\psi_0$, satisfying 
\[ 
V(X,\cdot)\psi_0(X,\cdot)=\lambda_0(X)\psi_0(X,\cdot)
\] 
for an eigenvalue $\lambda_0(X)\in\rset$ and (fixed) nuclei position $X$. 
The state $\hat\psi_1$ equal to a constant, in the case of no electrons,
corresponds to the electron eigenfunction $\psi_0$ in the case with electrons present.
In the general case with electrons, Section \ref{sec:bo}
shows that  still
\begin{equation}\label{alfa_ekv}
\tilde\psi-\hat \psi= i\alpha +\mathcal O(M^{-1})
\end{equation}
where $\alpha=\mathcal O(M^{-1/2})$ is real and parallel to $\psi_0$, which implies that the Hamiltonians $H_S$ and $H_E$ are
are $\mathcal O(M^{-1})$ close so also $G-\hat G=\mathcal O(M^{-1})$
and consequently the density bound \eqref{g_dens} holds.
To obtain the estimate \eqref{alfa_ekv} the important new property is that
oscillatory cancellation is used  in directions orthogonal to $\psi_0$,
reducing the error terms from $\mathcal O(M^{-1/2})$ to $\mathcal O(M^{-1})$,
in the case when a spectral gap condition holds.

\subsection{Construction of the solution operator}\label{start_sec}
The WKB-forms  \eqref{wkb_form}  and \eqref{hat_theta_eq} are meaningful when $\psi$ and $\check\psi$
do not include the full small scale
and we verify  in Section \ref{born_oppen} 
that both $\partial_X \psi$ and $\partial_X\check\psi$ are bounded independent of $M$, in the case of a spectral gap.
 Section \ref{born_oppen} also presents  conditions so
 that $\tilde\psi$ is $\mathcal{O}(M^{-1/2})$ close to an eigenvector  of $V$ in $L^2(dx)$.
To replace $\tilde\psi$
by such an electron eigenstate 
is called the {\it Born-Oppenheimer approximation}, which has been
studied  for the time-independent \cite{tanner,BO} and the time-dependent \cite{hagedorn,spohn} Schr\"odinger 
equations  by different methods.

To construct the solution operator it is convenient  to include a non interacting particle, i.e. a particle without charge,
in the system and assume that this particle moves with constant  high speed $dX_1^1/dt=p_1^1\gg 1$ (or equivalently
with speed one and larger mass);
such a non interacting particle does clearly not effect the other particles. The additional new coordinate $X^1_1$
is helpful in order to simply relate the  time-coordinate $t$ and $X^1_1$. To not change the original problem
\eqref{schrodinger_stat}, add the corresponding kinetic energy $(p^1_1)^2/2$ to $E$ and
write equation \eqref{psi_first_eq} in the fast time scale $\tau=M^{1/2}t$
\[
i \frac{d}{d\tau}\tilde\psi=(V-V_0)\tilde\psi - \frac{1}{2M} G\sum_j\Delta_{X^j}(G^{-1}\tilde\psi)
\]
and change to the coordinates 
\[
\mbox{$(\tau,X_0):=(\tau,X^1_2,X^1_3,X^2,\ldots,X^N)\in [0,\infty)\times I$ 
instead of $(X^1,X^2,\ldots, X^N)\in\rset^{3N},$}
\]
where $X^j=(X^j_1,X^j_2,X^j_3)\in \rset^3$,  to obtain
\begin{equation}\label{new_psi_eq}
\begin{array}{ll}
i \dot{\tilde\psi} +\frac{1}{2(p_1^1)^2}\ddot{\tilde\psi}
&=(V-V_0)\tilde\psi - \frac{1}{2M} G\sum_j\Delta_{X_0^j}(G^{-1}\tilde\psi)\\
&=:\tilde V\tilde\psi,
\end{array}
\end{equation}
using the notation $\dot w= dw/d\tau$ in this section; note also that $G$ is independent of $X_1^1$.
We see that the operator
\[
\bar V:=G^{-1} \tilde V G=\underbrace{G^{-1} (V-V_0)G}_{=V-V_0} -\frac{1}{2M}\sum_{j}\Delta_{X^j_0}
\]
is symmetric on $L^2(\rset^{3J+3N-1})$. The solution of the eikonal equation \eqref{theta_eq},
by the characteristics \eqref{char_eq}, becomes well defined
in a domain $U=[0,M^{1/2} t]\times \rset^{3N-1}$, in the new coordinates. 
Assume now the data $(X_0,p_0,z_0)$
for $X_0\in \rset^{3N-1}$ is $(L\mathbb Z)^{3N-1}$-periodic, then also $(X_\tau,p_\tau,z_\tau)$
is $(L\mathbb Z)^{3N-1}$-periodic.
To simplify the notation for such periodic functions, define the periodic circle \[\tset:=\rset/(L\mathbb Z).\]
We seek a solution $\Phi$ of \eqref{schrodinger_stat}
which is $(L\mathbb Z)^{3(J+N)-1}$-periodic in the $(x,X_0)$-variable.
The  Schr\"odinger operator $\bar V_\tau$ 
has, for each $\tau$, real eigenvalues $\{\lambda_m(\tau)\}$
with a complete set of eigenvectors $\{p_m(x,X_0)_\tau\}$ orthogonal in the space
of $x$-anti-symmetric functions in $L^2(\mathbb T^{3J+3N-1})$, see \cite{berezin};
its proof uses
that the operator $\bar V_\tau+\gamma I$ generates a compact solution operator
in the Hilbert space
of $x$-anti-symmetric functions in $L^2(\mathbb T^{3J+3N-1})$, for the constant $\gamma\in (0,\infty)$
chosen sufficiently large. The discrete spectrum and the compactness comes from 
Fredholm theory for compact operators and 
that the bilinear form $\int_{\tset^{3(J+N)-1}} v \bar V_\tau w +\gamma vw \ dxdX_0$
is continuous and coercive on $H^1(\tset^{3(J+N)-1})$, see \cite{evans} and Section \ref{fredholm}.
We see that  $\tilde V$ has the same eigenvalues $\{\lambda_m(\tau)\}$ and the eigenvectors $\{G_\tau p_m(\tau)\}$, 
orthogonal in the weighted $L^2$-scalar product 
\[
\int_{\mathbb T^{3N-1}}v \cdot w \ G^{-2} dX_0.\]
The construction and analysis of the solution operator continues in Sections \ref{born_oppen} and \ref{fredholm}
based on the spectrum.

\begin{remark}[{Boundary conditions}]
The eigenvalue problem \eqref{schrodinger_stat} makes sense not only
in the periodic setting but also with alternative {boundary conditions}
from interaction with an external environment, e.g. for scattering problems.
The inflow, with data given from
the time-independent Schr\"odinger problem, and the outflow of
characteristics gives a different perspective
on molecular dynamics simulations 
and the possible initial data for the time-dependent Schr\"odinger equation.
\end{remark}

\section{The time-dependent Schr\"odinger equation}\label{sec:time-depend}
The corresponding time-dependent {\it Ansatz} 
\[\psi(x,X,t)e^{iM^{1/2}\big(\theta(X,t)+Et\big)}\]
in the time-dependent Schr\"odinger equation  \cite{schrodinger}
\begin{equation}\label{time_dep_qm}
\frac{i}{M^{1/2}}\dot \Phi=H\Phi
\end{equation}
leads analogously to the equations 
\[
\begin{array}{ll}
\partial_t \theta + \frac{|\partial_X\theta|^2}{2}  &= E-V_0,\\
\partial_t \rho + \sum_j\partial_{X^j}(\rho\partial_{X^j}\theta) 
&=  M^{-1/2}{\sum_j \Im( \psi\cdot \Delta_{X^j}\psi }),\\
\end{array}
\]
coupled to the Schr\"odinger equation along its  characteristics $X_t$
\begin{equation}\label{time_char}
\begin{array}{ll}
&\frac{i}{M^{1/2}}\frac{d}{dt}\tilde{\psi}(x,X_t,t) =(V-V_0)\tilde\psi
-  \frac{G}{2M}{ \sum_j \Delta_{X^j}(\tilde\psi G^{-1})},\\
&\ddot{X} = - \frac{\partial_X (\tilde \psi\cdot V\tilde  \psi) }{\tilde\psi\cdot\tilde\psi},
\end{array}
\end{equation}
with the same equation for $(X,p,\tilde\psi,\rho,z,\partial_Xp,\partial_{XX}p)$
as for the characteristics  \eqref{theta_eq} and \eqref{tilde_psi_eq} in the time-independent formulation.
The Ehrenfest dynamics is therefore the same
when derived from the time-dependent and time-independent Schr\"odinger equations
and the additional coordinate, introduced by a non interacting particle in the construction of 
the time-independent solution in Section \ref{start_sec}, can be interpreted as time.
A difference is that the time variable is given
from the time-dependent  Schr\"odinger equation and implies classical velocity,
instead of the other way around for the time-independent formulation.
 
 \section{Surface-hopping and multiple states}\label{sec:surf}
In general the eigenvalue $E$ is degenerate with high multiplicity related to
that several combinations of kinetic nuclei energy and potential energy sum to $E$ 
\[
-\int_{\tset^{3(J+N)}}\Phi^*
\frac{1}{2M}\sum_j\Delta_{X^j}\Phi dxdX 
+\int_{\tset^{3(J+N)}} \Phi^*V\Phi dx dX =E\int_{\tset^{3(J+N)}} \Phi^*\Phi dx dX,
\]
with different excitations of kinetic nuclei energy and Born-Oppenheimer electronic eigenstates. 
When several such states are excited, it is
useful to consider a linear combination  of eigenfunctions $\sum_{n=1}^{\bar n}
\psi_n e^{iM^{1/2}\theta_n}=:\bar\Phi$, 
where the individual terms solve \eqref{schrodinger_stat}  for the same energy $E$.
We have
\begin{equation}\label{phi_ekvation}
H\bar\Phi=E\bar\Phi
\end{equation}
and the normalization $\sum_{n=1}^{\bar n}\|\psi_n\|_{L^2}^2=1$   implies $\|\bar\Phi\|_{L^2}=1$.
Such a solution $\bar\Phi$ can be interpreted 
as an  exact {\it surface-hopping} model.
The usual surface-hopping models make a somewhat different {\it Ansatz} with
the $x$-dependence of $\psi_n$  prescribed from a given  orthonormal basis in $L(\tset^{3J})$ of wave functions
of different energy
and with explicit time-dependence, see \cite{tully,tanner}.
This section extends the Ehrenfest dynamics to multiple states.

\subsection{Surface-hopping and Ehrenfest dynamics for multiple states}\label{supos}
The characteristic $(X_n,p_n)$, the wave function $\tilde\psi_n$, the density 
$\rho_n$ and the phase $z_n(t)=\theta_n\big(X_n(t)\big)$ 
determine the time-independent wave function $\psi_n$
and the corresponding Ehrenfest approximation
\begin{equation}\label{surf_hopp}
\begin{array}{ll}
\sum_{n=1}^{\bar n}\hat\psi_n(r_n\hat\rho_n)^{1/2} e^{iM^{1/2}\hat z_n}&=:  \hat\Phi\\
\end{array}
 \end{equation}
yields an approximation to $\bar\Phi$; the density of state $n$ is now a constant multiple,
$r_n\ge 0$, of the one-state density $\hat\rho_n$ defined in \eqref{density_fact}, normalized to
$\int_{\tset^{3(J+N)}} \hat\rho_n dX=1$ and  $\sum_{n=1}^{\bar n}r_n=1$.
In the case of no caustics, the Ehrenfest states $(\hat X_n,\hat\psi_n,\hat\rho_n), \ n=1,\ldots ,\bar n,$ 
satisfying 
\eqref{ehrenfest_dyn} and \eqref{density_fact},
are asymptotically uncoupled, see Section \ref{sec:observables}; a caustic point $a$ is where $1/\hat\rho_n(a)=0$.
In the case of caustics and $\bar n$ colliding characteristics, the phases become coupled
and it is necessary to 
have a sum of $\bar n$ WKB terms for $\hat\Phi$ to approximate the eigenfunction $\Phi$, in \eqref{schrodinger_stat},
away from caustic points,
see \cite{keller,maslov}.  Consequently in the presence of caustics, surface-hopping approximation may
improve poor approximation  by Ehrenfest dynamics.

\section{Computation of observables}\label{sec:observables}
Assume the goal is to compute a real-valued {\it observable} 
\[\int_{\tset^{3N}}\bar\Phi\cdot A\bar\Phi dX\] for a given bounded linear multiplication operator $A=A(X)$
on $L^{2}(\tset^{3N})$ and a solution $\bar\Phi$ of \eqref{phi_ekvation}.
We have
\begin{equation}\label{observ_expand}
\begin{array}{ll}
\int_{\tset^{3N}}\bar\Phi\cdot A\bar\Phi dX
&=\sum_{n,m} \int_{\tset^{3N}}A(\psi_ne^{iM^{1/2}\theta_n})\cdot \psi_m e^{iM^{1/2}\theta_m} dX\\
&=\sum_{n,m}\int_{\tset^{3N}}Ae^{iM^{1/2}(\theta_m-\theta_n)} (\psi_n\cdot \psi_m) dX.
\end{array}
\end{equation}
The integrand is oscillatory for $n\ne m$, so that critical points of the phase difference give the main contribution:
the method of stationary phase (cf. \cite{duistermaat,maslov} and Section \ref{stationary_phase}) shows that these integrals  are small, 
bounded by $\mathcal{O}(M^{-3N/4})$,
in the case when the phase  difference has non degenerate critical points (or no critical point)
 and the functions $A \psi_n\cdot \psi_m$, $\theta_n$  are sufficiently smooth.
 A critical point $a$  satisfies $\partial_X \theta_n(a)-\partial_X\theta_m(a)=0$, 
 which means that the two different paths, generated by $\theta_n$ and $\theta_m$,
 through $a$ also have the same velocity $p$ in this point. Consequently the critical point
must be a caustic  point since  otherwise the paths are the same.
 That the critical point is non degenerate means that
 the matrix $\partial_{XX}(\theta_n-\theta_m)(a)$ is non singular.

We see that WKB terms,  with smooth phases and coefficients avoiding caustics,
 are asymptotically  orthogonal, so that the density of a linear combination of them,
separates asymptotically to a sum of densities of the individual WKB terms:
\begin{equation}\label{obser_decay}
\int_{\tset^{3N}}\bar\Phi\cdot A\bar\Phi dX
=\sum_{n=1}^{\bar n}\int_{\tset^{3N}}A\, \underbrace{\psi_n\cdot  \psi_n}_{=\rho_n\, r_n} dX +\mathcal{O}(M^{-1}),
\end{equation}
in the case of multiple eigenstates, $\bar n>1$, and 
\[
\int_{\tset^{3N}}\bar\Phi\cdot A\bar\Phi dX
=\int_{\tset^{3N}}A\, \psi_1\cdot  \psi_1 dX
\]
for a single eigenstate.
We will study molecular dynamics approximations of a single state
\begin{equation}\label{observable_sum}
\int_{\tset^{3N} }A\, \psi_n\cdot \psi_n dX
=
\int_{\tset^{3N}} A(X) \rho_n(X) dX. 
\end{equation}
in the next section.

 In the presence of a
caustic, the WKB terms can be asymptotically non orthogonal,  since their coefficients and phases typically are
not smooth enough to allow the integration by parts  to gain powers of $M^{-1/2}$.  Non orthogonal WKB functions
tells how
the caustic couples the WKB modes.

To numerically compute the integral \eqref{observable_sum} requires to find an approximation $\hat\rho_n$  of $\rho_n$,
using the Ehrenfest solution $(\hat X_n, \hat\psi_n)$, and 
to replace the integrals by quadrature (with a finite number of points $X$).
The quadrature approximation is straight forward in theory, although costly in practical computations.
Regarding the inflow density  $\hat \rho_n\big|_I$ there are two situations - either the characteristics
return often to the inflow domain or not. If they do not return we have a scattering problem and
it is reasonable to define the inflow-density  
$\hat \rho_n\big|_I$ as an initial condition. If characteristics return,
the dynamics can be used to estimate the return-density  $\hat \rho_n\big|_I$ as follows:
assume that
the following limits exist
\begin{equation}\label{ergod_limit}
\begin{array}{ll}
\lim_{T\rightarrow\infty}
\frac{1}{T}\int_0^{T} A(\hat X_t) dt 
&= \int_{\rset^{3N}}A(\hat X) \hat\rho_n(\hat X) d\hat X
\end{array}
\end{equation}
which bypasses   the need to find $\hat\rho_n\big|_I$ and the quadrature in the number of characteristics.
A way to think about this limit  is to sample the return points $\hat X_t\in I$ and from these samples construct an {\it empirical} return-density, 
converging to $\hat\rho_n\big|_I$ as the number of return iterations tends to infinity. 
We shall use this perspective to view the Eikonal equation \eqref{hj_classic} as a hitting problem
on $I$, with hitting times $\tau$ (i.e. return times).
We allow the density $\hat\rho|_I$ to depend on the initial
position $\hat X_0$; the more restrictive property  to have $\hat\rho|_I$ constant as a function
of $\hat X_0$ is called {\it ergodicity}. 

\section{Approximation of a deterministic WKB state by Hamiltonian dynamics}\label{sec_hj}
A numerical computation of an approximation to $\sum_{n}\int_{\tset^{3N} }\psi_n\cdot A\psi_n dX$ has the main ingredients:
\begin{itemize}
\item[(1)] to approximate the exact characteristics by Ehrenfest characteristics \eqref{ehrenfest_dyn},
\item[(2)] to discretize the Ehrenfest characteristics equations, and either
\item[(3)] if $\rho\big|_I$ is  an inflow-density, to introduce quadrature in the number of characteristics, or
\item[(4)] if $\rho\big|_I$ is  a return-density, to replace the ensemble average by a time average 
using the property \eqref{ergod_limit}. 
\end{itemize}
This section presents a derivation of the approximation error in step one, in the case of
a return density, avoiding the second, third and fourth
discretization steps studied
for instance in \cite{cances,lebris_hand,lebris}.

\subsection{The Ehrenfest approximation error}\label{sec:ehren_approx}
This section shows that the 
Ehrenfest system,  written as a Hamiltonian system \eqref{psi_X_eq}, 
approximates Schr\"odinger observables without caustics. We see that the approximate wave function
$\hat\Phi$ defined by
\begin{equation}
\label{ehren_dynamik}
\hat\Phi=\hat\rho^{1/2}\hat\psi  e^{ iM^{1/2}\hat\theta}\ ,
\end{equation}
where $\hat\psi=\hat\phi e^{iM^{1/2}\int_0^t\hat\phi\cdot V\hat\phi(s)ds}$
and $(\hat X,\hat\phi)$ solves the Hamiltonian system \eqref{hj-system} or the system \eqref{transform_phi},
is an approximate solution to the Schr\"odinger equation \eqref{schrodinger_stat}
\begin{equation}\label{h-e_approx}
(H-E)\hat\Phi=
-\frac{1}{2M}e^{ iM^{1/2}\hat\theta} \sum_j\Delta_{X^j} (\hat\rho^{1/2}\hat\psi),
\end{equation}
since by \eqref{wkb_eq}, \eqref{tilde_psi_eq}, \eqref{hat_theta_eq} and \eqref{ehrenfest_dyn}
\begin{equation}\label{h-e-approx}
\begin{array}{ll}
(H-E)\hat\Phi&=-\underbrace{\big(iM^{-1/2}\dot{\hat\psi} - (V-\hat V_0)\hat\psi\big)}_{=0}
\hat\rho^{1/2}e^{ iM^{1/2}\hat\theta} \\
&\quad + \underbrace{\big( \frac{|\partial_X\hat\theta|^2}{2} + \hat\psi\cdot V\hat\psi -E\big)}_{=0} 
\hat\rho^{1/2}\hat\psi e^{ iM^{1/2}\hat\theta} \\
&\quad -\frac{ e^{ iM^{1/2}  \hat\theta}} {2M}\sum_j\Delta_{X^j} (\hat\rho^{1/2}\hat\psi)\\
&=-\frac{ 1} {2M} e^{ iM^{1/2}  \hat\theta}\sum_j\Delta_{X^j} (\hat\rho^{1/2}\hat\psi).
\end{array}
\end{equation}
Therefore $\hat\Phi$ approximates  a single WKB eigenstate $\Phi$, satisfying $H\Phi=E\Phi$.
The following theorem presents conditions for accurate approximation error of observables
\[
\int_{\tset^{3N}} g(X) \ \underbrace{\Phi\cdot\Phi}_{=\rho(X)} dX  
- \int_{\tset^{3N}} g(X) \underbrace{\hat\Phi \cdot \hat\Phi}_{=\hat\rho(X)}dX,
\]
using that either  the spectral gap condition \eqref{gap_c1} {\it or}
the crossing eigenvalue condition \eqref{gap_cr1} holds, based on
the electron eigenvalues $\lambda_n$, defined by $V(X)\psi_n(X)=\lambda_n(X)\psi_n(X)$ in $L^2(dx)$.

{\it The spectral gap condition.}
The electron eigenvalues $\{\lambda_n\}$
satisfy  for some positive $c$ the spectral gap condition
\begin{equation}\label{gap_c1}
\inf_{n\ne 0,\ Y\in D} |\lambda_n( Y)-\lambda_0( Y)|>c,
\end{equation}
where $D:=\{X_t \ :\ t\ge 0\}\cup\{\hat X_t \ :\ t\ge 0\}$ is the set of exact and
approximate nuclei positions.

{\it The crossing eigenvalue condition.}
Assume that 
there is a positive constant $c$ such that the sets
of eigenvalue number $n\ne 0$ and eigenvalue crossing times $\sigma$,
with maximal hitting times $\tau$ (defined in \eqref{hitting}),
\begin{equation}\label{gap_cr1}
\begin{array}{ll}
\Omega_X&=\{(n,\sigma)\ :\ \lambda_n(X_{\sigma})=\lambda_0(X_{\sigma}),  \quad 0\le \sigma\le \tau, 
n\ne 0\}\\
\Omega_{\hat X}&=\{(n,\sigma)\ :\ \lambda_n(\hat X_{\sigma})=\lambda_0(\hat X_{\sigma}),  \quad 0\le \sigma\le \tau, 
n\ne 0\}
\end{array}
\end{equation}
consists of finite many crossing times $\sigma$
with precisely one $n$ value for each crossing
time  (but possibly different $n$ at different crossing times)
and 
\[
\begin{array}{ll}
& |\frac{d}{dt}\big(\lambda_n(X_t)-\lambda_0( X_t)\big)|_{(n,t)\in \Omega_X}>c \\
& |\frac{d}{dt}\big(\lambda_n(\hat X_t)-\lambda_0( \hat X_t)\big)|_{(n,t)\in \Omega_{\hat X}}>c.\\
\end{array}
\]

\begin{theorem}\label{thm_ehrenfestapprox}
Assume that the electron eigenvalues have a spectral gap \eqref{gap_c1} and $\alpha=1-\delta$,
or that they form non degenerate critical points \eqref{gap_cr1} and $\alpha=3/4$,
then
Ehrenfest dynamics \eqref{ehrenfest_ham},
assumed to avoid caustics
and have bounded hitting times $\tau$ in \eqref{hitting}, approximates time-independent
Schr\"odinger observables with the error bounded by $\mathcal{O}(M^{-\alpha}):$
\begin{equation}\label{hj_estimate}
\begin{array}{ll}
\int_{\tset^{3N}}g(\hat X)\hat\rho(\hat X) d\hat X
&=\int_{\tset^{3N}}g(X)\rho( X) d X + \mathcal{O}(M^{-\alpha}), \mbox{ for any } \delta> 0.
\end{array}
\end{equation}
\end{theorem}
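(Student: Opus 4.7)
The plan is to exploit the residual identity \eqref{h-e_approx}, which already displays $\hat\Phi$ as an exact eigenfunction of $H$ perturbed by a source of size $\mathcal{O}(M^{-1})$ in the natural norm. The strategy is to convert this residual into a pointwise comparison of densities $\rho-\hat\rho$ through the Hamiltonian/Hamilton--Jacobi framework of Sections~\ref{wkb_utan_elektroner} and~\ref{sec_hj}, and finally into the observable estimate using the return-time description of the dynamics.

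First I would compare the two Hamiltonian systems side by side: the exact Schr\"odinger characteristics $(X,p,\tilde\psi)$ from \eqref{tilde_psi_eq} and the Ehrenfest characteristics $(\hat X,\hat p,\hat\psi)$ from \eqref{ehrenfest_dyn}. Both generate eikonal phases $\theta,\hat\theta$ with potentials $V_0$ and $\hat V_0$. The key technical estimate is that, under either hypothesis, $V_0-\hat V_0=\mathcal{O}(M^{-1})$ holds along the flow, which by Gr\"onwall forces $X_t-\hat X_t=\mathcal{O}(M^{-1})$ on the finite hitting-time intervals assumed in the theorem, and consequently $G_t-\hat G_t=\mathcal{O}(M^{-1})$ for the integrating factors \eqref{G_definition} and \eqref{G_const}. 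By \eqref{euler_lagrange-det} and the no-caustic assumption, the Jacobian $\det(\partial X_t/\partial X_0)$ stays uniformly bounded away from zero, so this converts directly into $\rho(X_t)-\hat\rho(\hat X_t)=\mathcal{O}(M^{-1})$.

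The main obstacle, and the heart of the proof, is to justify that $V_0-\hat V_0=\mathcal{O}(M^{-1})$ rather than the naive $\mathcal{O}(M^{-1/2})$ obtained from the Ehrenfest residual. Following the heuristic of Section~\ref{wkb_utan_elektroner}, I would write $\tilde\psi=\hat\psi+i\alpha\psi_0+R$, with $\alpha$ real, $\psi_0$ the instantaneous electron ground state, and $R$ in $\psi_0^{\perp}$ in $L^2(dx)$, and substitute this into the iteration \eqref{hat_noel}. Under the spectral gap condition \eqref{gap_c1}, the operator $V-\lambda_0$ is boundedly invertible on $\psi_0^{\perp}$, so one solves for $R$; integration by parts in $t$ along characteristics (oscillatory cancellation of the factor $e^{iM^{1/2}(\lambda_n-\lambda_0)t}$ arising from the spectral projection onto eigenmode $n\ne 0$) then gains a factor $M^{-1/2}$, yielding $R=\mathcal{O}(M^{-1})$. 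Since the remaining $i\alpha\psi_0$ piece is parallel to the ground state, the bilinear form $\tilde\psi\cdot V\tilde\psi/\tilde\psi\cdot\tilde\psi$ differs from $\hat\psi\cdot V\hat\psi/\hat\psi\cdot\hat\psi$ only at second order in $\alpha$ plus first order in $R$, giving the required $V_0-\hat V_0=\mathcal{O}(M^{-1})$. The small logarithmic accumulation of constants over the hitting time produces the $\delta$ loss in the rate $\alpha=1-\delta$.

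The last step is to pass from the pointwise density bound to the observable \eqref{hj_estimate}. Using the hitting/return-time picture from Section~\ref{sec:observables} and the ergodic identity \eqref{ergod_limit}, the ensemble integrals on both sides reduce to time averages of $g$ along the respective characteristics, and the $\mathcal{O}(M^{-\alpha})$ pointwise bound on the densities, together with the bounded hitting times, yields the claim. In the crossing eigenvalue case \eqref{gap_cr1} the invertibility of $V-\lambda_0$ on $\psi_0^{\perp}$ breaks down at finitely many times $\sigma$; here I would split each hitting interval into short neighborhoods of the crossing times and their complement, apply the spectral gap argument on the complement, and on each crossing neighborhood use the non-degeneracy $|d_t(\lambda_n-\lambda_0)|>c$ to invoke stationary phase on the relevant oscillatory integral, gaining a factor $M^{-1/4}$ rather than $M^{-1/2}$. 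Matching these two regimes at the boundaries and summing over the finitely many crossings produces the reduced rate $\alpha=3/4$; this matching is the most delicate bookkeeping in the argument.
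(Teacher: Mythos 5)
Your decomposition $\tilde\psi=\hat\psi+i\alpha\psi_0+R$ with $R=\mathcal O(M^{-1})$, and the mechanism of gaining $M^{-1/2}$ via integration by parts against the fast oscillation $e^{iM^{1/2}(\lambda_n-\lambda_0)t}$ (respectively stationary phase at nondegenerate crossings) are all consistent with the paper's Section~\ref{sec:bo}, and your reading of the heuristic in Section~\ref{wkb_utan_elektroner} is correct as far as the wave function comparison goes.

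However, the central bridge in your argument --- passing from $V_0-\hat V_0=\mathcal{O}(M^{-1})$ to $X_t-\hat X_t=\mathcal{O}(M^{-1})$ by Gr\"onwall, and then to $G_t-\hat G_t=\mathcal{O}(M^{-1})$ --- is not what the paper does, and as stated it has a genuine gap. To run Gr\"onwall you would need $V_0$ and $\hat V_0$, and their gradients, to be close as functions on a common domain; but $V_0(X)=\tilde\psi\cdot V\tilde\psi/\tilde\psi\cdot\tilde\psi$ and $\hat V_0(X)=\hat\psi\cdot V\hat\psi$ are built from wave functions transported along \emph{different} characteristic flows, so comparing them pointwise presupposes the very pathwise proximity you are trying to establish. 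Moreover $G_t$ and $\hat G_t$ involve second derivatives $\partial_{XX}\theta$, so a pathwise $X$-estimate does not directly control the Jacobian factor. The paper instead introduces a second Hamiltonian system $H_S$ in \eqref{S_hamiltonian} (after a small modification \eqref{v_0-def} of $V_0$), and invokes the \emph{Hamilton--Jacobi stability estimate} \eqref{theta_alfa} proved in Remark~\ref{hj_stab}: this bounds the difference of value functions by the integral of the difference of Hamiltonians evaluated \emph{along a single path} $(\partial_Y\theta(\tilde Y_t),\tilde Y_t)$, so no pathwise comparison $X_t\leftrightarrow\hat X_t$ is ever needed. The quadratic bound $\psi^\perp\cdot(V-\lambda_0)\psi^\perp=\mathcal{O}(M^{-1})$ and the extra Laplacian term supply the $\mathcal O(M^{-\alpha})$ Hamiltonian perturbation in \eqref{H_fel}, and the same HJ stability mechanism is then reapplied to the convective equation \eqref{g_hj} for $\log\rho$ to get \eqref{g_fel}, from which $\int g\rho\,dX-\int g\hat\rho\,dX=\mathcal O(M^{-\alpha})$ follows by direct integration. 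Your final appeal to the ergodic identity \eqref{ergod_limit} is also not part of the argument: that equation is a computational remark about time averages, whereas the proof compares densities directly. In short, the missing idea is the value-function comparison via the Pontryagin/Hamilton--Jacobi principle \eqref{theta_alfa}, which is exactly what makes the infinite-time (return-time) comparison robust without a flow-map Gr\"onwall estimate.
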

The proof is in Section \ref{born_oppen}, with the main steps
separated into  six subsections: spectral decomposition,
Schr\"odinger eigenvalues as a Hamiltonian system, stability from perturbed Hamiltonians,
the Born-Oppenheimer approximation, the stationary phase method, and error estimates of the densities.
 The observable may include the time variable, through the position of a
non interacting (non-charged) nucleus  moving with given velocity, so
that transport properties as e.g. the diffusion and viscosity of a liquid may
by determined, cf. \cite{frenkel}.
A case with no caustics is e.g. when $\min_X(E-V_0(X))>0$ in one dimension $X\in \rset$.

\subsection{The Born-Oppenheimer approximation error}\label{BO_stycke}
The Born-Oppenheimer approximation leads to the 
standard formulation of {\it ab initio} molecular dynamics, in the micro-canonical ensemble
with constant number of particles, volume and energy, for the nuclei positions  $\bar X$,
\begin{equation}\label{bo_stated}
\begin{array}{ll}
\dot{\bar X}_t &= \bar p_t\\
\dot{\bar p}_t&=- \partial_X \lambda_0(\bar X_t)\\
\end{array}
\end{equation}
by using that the electrons are in the eigenstate  $\psi_0$ with eigenvalue $\lambda_0$ to $V$, in $L^2(dx)$
for fixed $X$. The corresponding Hamiltonian is $H_{BO}(\bar X,\bar p):= |p|^2/2 + \lambda_0(X)$.
The following theorem shows that
Born-Oppenheimer dynamics approximates  Schr\"odinger observables, based on a single WKB eigenstate,
as accurate as the Ehrenfest dynamics, using the approximate Schr\"odinger solution
 \[
\hat\Phi=\bar G^{-1}\psi_0  e^{ iM^{1/2}\bar\theta_0},\] 
with density $\bar\rho:=\hat\Phi\cdot\hat\Phi$ and $\bar G$ defined by $d\log \bar G/dt = \mbox{div} \bar p/2$,
as for $G$ in \eqref{s_int}
\begin{theorem}\label{bo_thm}
Assume that the electron eigenvalues have a spectral gap \eqref{gap_c1} and $\alpha=1$,
or that they form non degenerate critical points \eqref{gap_cr1} and $\alpha=1/2$,
then the zero-order Born-Oppenheimer dynamics \eqref{bo_stated}, assumed to avoid caustics
and bounded hitting times $\tau$ in \eqref{hitting}, approximates time-independent
Schr\"odinger observables with error bounded by $\mathcal{O}(M^{-\alpha+\delta}):$
\begin{equation}\label{bo_estimate}
\begin{array}{ll}
\int_{\tset^{3N}}g(\bar X)\bar\rho( \bar X) d \bar X
&=\int_{\tset^{3N}}g(X)\rho( X) d X + \mathcal{O}(M^{-\alpha+\delta}), \mbox{ for any } \delta>0.
\end{array}
\end{equation}
\end{theorem}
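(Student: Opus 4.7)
The plan is to mimic the Hamilton--Jacobi stability framework developed for Theorem~\ref{thm_ehrenfestapprox} in Section~\ref{born_oppen}, substituting the Born--Oppenheimer wave function $\hat\Phi=\bar G^{-1}\psi_0(\bar X)e^{iM^{1/2}\bar\theta_0(\bar X)}$ for the Ehrenfest one. First I would compute the residual $(H-E)\hat\Phi$ by the same WKB manipulation that led to \eqref{h-e-approx}. The eikonal piece $\bigl(|\partial_X\bar\theta_0|^2/2+V-E\bigr)\bar G^{-1}\psi_0=(V-\lambda_0)\bar G^{-1}\psi_0$ vanishes exactly because $\psi_0$ is a true electron eigenfunction; the integrating-factor identity, applied with $\psi=\bar G^{-1}\psi_0$ in place of $G^{-1}\tilde\psi$, collapses the transport piece to $iM^{-1/2}\bar G^{-1}\dot\psi_0(\bar X_t)e^{iM^{1/2}\bar\theta_0}$, which is a genuine $\mathcal{O}(M^{-1/2})$ defect because $\psi_0(\bar X_t)$ is merely re-evaluated along the flow rather than transported by it; and the remaining Laplacian piece is $\mathcal{O}(M^{-1})$.

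The conversion of this defect into an observable error follows the same route as for Ehrenfest: pair $\hat\Phi$ with the exact WKB eigenstate $\Phi=G^{-1}\tilde\psi e^{iM^{1/2}\theta}$ and use the hitting-time interpretation of Section~\ref{start_sec} to recast the infinite-time question as a finite-time stability problem on the inflow surface $I$, with the error controlled by the residual above. The density comparison $\bar\rho-\rho=\bar G^{-2}-G^{-2}\tilde\psi\cdot\tilde\psi$ is then estimated exactly as in \eqref{alfa_ekv}--\eqref{g_dens}: in the spectral-gap case, $\tilde\psi-\psi_0=i\alpha+\mathcal{O}(M^{-1})$ with $\alpha$ real and parallel to $\psi_0$, which forces $\tilde\psi\cdot\tilde\psi=1+\mathcal{O}(M^{-1})$ and $G-\bar G=\mathcal{O}(M^{-1})$ on each characteristic, so the pointwise density error is already $\mathcal{O}(M^{-1})$.

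The improvement of the transport defect from $\mathcal{O}(M^{-1/2})$ to $\mathcal{O}(M^{-1+\delta})$ under the spectral gap should come from oscillatory cancellation, echoing the mechanism highlighted after \eqref{alfa_ekv}. Expanding $\dot\psi_0(\bar X_t)=c_0(t)\psi_0+\sum_{n\ne 0}c_n(t)\psi_n$ in the electron eigenbasis of $V(\bar X_t)$, the component $c_0\psi_0$ is real parallel to $\psi_0$ (since $\psi_0\cdot\psi_0\equiv 1$) and is absorbed into $\bar\theta_0$ by a gauge phase that leaves $|\hat\Phi|^2$ invariant. For each $n\ne 0$ the pairing of $c_n\psi_n$ against the exact state carries a fast electronic phase of frequency $M^{1/2}|\lambda_n-\lambda_0|\ge M^{1/2}c$, and one integration by parts in time against this phase trades the $M^{-1/2}$ factor for $M^{-1}$, the $\delta$ loss accounting for boundary terms over the bounded hitting time $\tau$.

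The main obstacle will be the non-degenerate crossing case \eqref{gap_cr1}. Near a crossing time $\sigma$, $\psi_0$ loses smoothness as it rotates into some $\psi_n$, so $\dot\psi_0$ may become large and the integration by parts of the previous paragraph breaks down in a shrinking window where the gap collapses. My plan is to localize: split $[0,\tau]$ into the finitely many $\mathcal{O}(M^{-1/2})$-windows around crossings, where the non-degeneracy $|\frac{d}{dt}(\lambda_n-\lambda_0)|>c$ lets the stationary-phase method of Section~\ref{sec:bo} bound each $\psi_0$-$\psi_n$ transition integral by $\mathcal{O}(M^{-1/2})$, and the complementary intervals, where the effective spectral gap is $\ge c M^{-1/2}$ and one integration by parts again yields $\mathcal{O}(M^{-1/2})$. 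Summing the finitely many contributions and combining with the density comparison produces the claimed $\mathcal{O}(M^{-1/2+\delta})$ error; the loss of one half power relative to the Ehrenfest bound $\alpha=3/4$ reflects precisely that the BO transport residual is $\mathcal{O}(M^{-1/2})$ rather than $\mathcal{O}(M^{-1})$.
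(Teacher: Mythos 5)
Your outline diverges from the paper's proof in a structural way that leaves a genuine gap. The paper does \emph{not} estimate the Schr\"odinger residual $(H-E)\hat\Phi$ of the Born--Oppenheimer WKB ansatz and then convert a residual bound into an observable bound. Instead, Section~\ref{born_oppen} works entirely at the level of \emph{Hamiltonian differences} evaluated along trajectories: it computes $(H_E-H_{BO})(\hat X,\hat p,\hat\psi)=\hat\psi\cdot(V-\lambda_0)\hat\psi=\psi^\perp\cdot(V-\lambda_0)\psi^\perp$, which is \emph{automatically quadratic} in $\psi^\perp$ because $(V-\lambda_0)\psi_0=0$ and $V-\lambda_0$ is self-adjoint (so the cross terms vanish identically). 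This gives $\mathcal O(M^{-1})$ directly from $\|\psi^\perp\|=\mathcal O(M^{-1/2})$ with no oscillatory-cancellation argument needed. The Hamilton--Jacobi stability estimate \eqref{theta_alfa} of Remark~\ref{hj_stab}, driven by this Hamiltonian difference, is then what propagates the perturbation into $\theta-\bar\theta$, and a second application to the transport-type equation \eqref{g_hj} for $\log\rho$ gives $\|\log\rho-\log\bar\rho\|_{L^\infty}=\mathcal O(M^{-\alpha+\delta})$ in \eqref{g_fel}.

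Your plan, by contrast, produces a residual defect that is only \emph{linear} in $\psi^\perp$, namely the transport term $iM^{-1/2}\bar G^{-1}\dot\psi_0$ which is a genuine $\mathcal O(M^{-1/2})$ object, and then tries to recover a factor $M^{-1/2}$ by integration by parts against the fast electronic phase. Two problems. First, you never specify the map from a residual $r=(H-E)\hat\Phi$ to an error in the position observable; since $E$ is an eigenvalue, $(H-E)$ has a kernel, and a small residual does not by itself control $\hat\Phi-\Phi$ --- this is precisely the obstacle that the paper's change of viewpoint to Hamiltonian systems and Hamilton--Jacobi hitting problems is designed to remove. Second, you invoke \eqref{alfa_ekv}--\eqref{g_dens} to conclude $G-\bar G=\mathcal O(M^{-1})$, but in the paper that conclusion is itself deduced \emph{from} the Hamiltonian closeness $H_S-H_E=\mathcal O(M^{-1})$ via the stability framework; it is not an independent fact that can be combined with a residual estimate. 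In short, you have identified the correct intermediate quantities ($\psi^\perp$, $G$, the hitting-time reduction) but routed them through a residual calculus for which the key conversion step is missing, whereas the paper replaces that calculus wholesale by the comparison $H_S\to H_E\to H_{BO}$ of scalar Hamiltonians together with Remark~\ref{hj_stab}.
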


The proof is in Section \ref{born_oppen}. 

\begin{remark}[Why do symplectic numerical simulations of molecular dynamics work?]\label{md_sim}
The derivation of the approximation error for the Ehrenfest  and Born-Oppenheimer dynamics, 
in Theorems \ref{thm_ehrenfestapprox} and \ref{bo_thm},
also allows
to study perturbed systems.  For instance, the perturbed 
Born-Oppenheimer dynamics 
\[
\begin{array}{ll}
\dot X&=p + \partial_pH^{\delta}(X,p,\hat\psi)\\
\dot p&=-\partial_X\lambda_0(X) -\partial_X H^\delta(X,p,\hat\psi),\\
\end{array}
\]
generated
from a perturbed Hamiltonian $H_{BO}(X,p,\hat\psi)+ H^\delta(X,p,\hat\psi)=E$,
with the perturbation satisfying
\begin{equation}\label{stab_h_d}
\|H^\delta\|_{L^\infty} \le \delta
\quad \mbox{ for some $\delta\in (0,\infty)$}
\end{equation}
yields through \eqref{H_felet} and \eqref{g_fel} an additional error term $\mathcal{O}(\delta)$
to the approximation of observables in \eqref{hj_estimate}. So called symplectic numerical methods are precisely those that can
be written as perturbed Hamiltonian systems, see \cite{ms}, and consequently we have a method to precisely analyze their numerical error
by combining an explicit construction of $H^\delta$ 
with the stability condition \eqref{stab_h_d}
to obtain $\mathcal{O}(M^{-1}+\delta)$ accurate approximations.
The popular St\"ormer-Verlet method  is symplectic and the positions $X$ coincides with those of
the symplectic Euler method,  for which $H^\delta$ is explicitly constructed in \cite{ms}
with  $\delta$ proportional to the time step. \end{remark}

\section{Analysis of the molecular dynamics approximation}\label{born_oppen}

This section continues the construction of the solution operator started in Section \ref{start_sec}.
It is written for the Schr\"odinger characteristics, but it can be directly applied to the Ehrenfest
case by replacing $\tilde V$ by $V-\hat V_0$ (by formally taking the limit $M\rightarrow\infty$ in the term
$(2M)^{-1}G\sum_j\Delta X_j (G^{-1}\tilde\psi)$).
Assume for a moment that $\tilde V$ is  independent of $\tau$. 
Then the solution to \eqref{new_psi_eq}
can be written
as a linear combination of the two exponentials
\[
Ae^{i\tau\alpha_+} + Be^{i\tau \alpha_-}
\]
where the two characteristic roots are
\[
\alpha_\pm=(p_1^1)^2\big(-1\pm (1-2(p_1^1)^{-2}\tilde V)^{1/2}\big).
\]
We see that $e^{i\tau\alpha_-}$ is a highly oscillatory solution on the fast $\tau$-scale with 
\[
\alpha_-=-2(p_1^1)^2 +\tilde V+ \mathcal{O}\big(\tilde V^2/(p_1^1)^2\big),\]
while 
\begin{equation}\label{alfa_1_bound}
\alpha_+= -\tilde V +\mathcal{O}(\tilde V^2/(p_1^1)^2).
\end{equation}
Therefore we chose initial data 
\begin{equation}\label{tau_const}
i\dot{\tilde\psi}|_{\tau=0}=-\alpha_+\tilde\psi|_{\tau=0}
\end{equation}
to have $B=0$, which eliminates the fast scale, and  the
 limit $p_1^1\rightarrow\infty$  determines the solution by
the Schr\"odinger equation
\[
i\dot{ \tilde\psi}=\tilde V\tilde\psi.
\]
In the case of the Ehrenfest dynamics, this equation with $\tilde V$ replaced by $V-\hat V_0$ is the starting point.
The next section presents an analogous  construction for the
slowly, in $\tau$, varying  operator $\tilde V$.

\subsection{Spectral decomposition}
Write \eqref{new_psi_eq} as the first order system
\[
\begin{array}{ll}
i\dot{\tilde\psi} &=v\\
\dot v&=2(p_1^1)^2 i(\tilde V\tilde\psi-v)
\end{array}
\]
which for $\bar\psi:=(\tilde\psi,v)$ takes the form
\[
\dot{\bar\psi}=iA\bar\psi\quad A:=
\left(\begin{array}{cc}
0 & -1\\
2(p_1^1)^2\tilde V & -2(p_1^1)^2\\
\end{array}\right),
\]
where the  eigenvalues $\lambda_\pm$ , right eigenvectors $q_\pm$ and left eigenvectors $q^{-1}_\pm$ of the real "matrix" operator $A$ are
\[
\begin{array}{ll}
\lambda_\pm &:= (p_1^1)^2\Big(-1\pm \big(1-2 (p_1^1)^{-2}\tilde V\big)^{1/2}\Big),\\
q_+ &:=\left(\begin{array}{c}
1\\
-\lambda_+\\
\end{array}\right),\\
q_- &:=\left(\begin{array}{c}
-\lambda_-^{-1}\\
1\\
\end{array}\right),\\
q_+^{-1} &:=\frac{1}{1-\lambda_+/\lambda_ -}\left(\begin{array}{c}
1\\
\lambda_-^{-1}\\
\end{array}\right),\\
q_-^{-1} &:=\frac{1}{1-\lambda_+/\lambda_-}\left(\begin{array}{c}
\lambda_+\\
1\\
\end{array}\right).
\end{array}
\]
We see that $\lambda_+=-\tilde V+\mathcal{O}\big((p_1^1)^{-2}\big)$ and 
$\lambda_-=-2(p_1^1)^{2} +\tilde V+\mathcal{O}\big((p_1^1)^{-2}\big)$.
The important property here is that the left eigenvector limit
$\lim_{p_1^1\rightarrow\infty} q_+^{-1}=(1,0) $ is constant, independent of $\tau$,
which implies that the $q_+$ component $ q_+^{-1}\bar\psi=\tilde\psi$ decouples:
we obtain in the limit $p_1^1\rightarrow\infty$ the time-dependent Schr\"odinger equation
\[
\begin{array}{ll}
i\dot{\tilde\psi}(\tau)&=i \frac{d}{d\tau}( q_+^{-1}{\bar\psi}_\tau) \\
&= i q_+^{-1}\frac{d}{d\tau}{\bar\psi}_\tau \\&
=- q_+^{-1}A_\tau \bar\psi_\tau\\
&=-\lambda_+(\tau) q_+^{-1} \bar\psi_\tau\\
&=-\lambda_+(\tau) \tilde\psi(\tau) \\ 
&=\tilde V_\tau \tilde\psi(\tau), \\ 
\end{array}
\]
where the operator $\tilde V_\tau$ depends on $
\tau$ and $(x,X_0)$, and we define
the solution operator $S$
\begin{equation}\label{psi_evolution}
\tilde\psi(\tau)=S_{\tau,0}\tilde \psi(0).
\end{equation}
As in \eqref{tau_const} we can view this as choosing special initial data for $\tilde\psi(0)$;
from now on we only consider such data.

The operator $\tilde V$ can be symmetrized 
\begin{equation}\label{v_sym}
\bar V_\tau:=G_{\tau}^{-1} \tilde V_\tau G_{\tau}
= (V-V_0)_\tau - \frac{1}{2M} \sum_{j} \Delta_{X^j_0},
\end{equation}
with real eigenvalues  $\{\check\lambda_m\}$ and orthonormal eigenvectors $\{p_m\}$ in $L^2(dxdX_0)$,
satisfying 
\[
\bar V_\tau p_m(\tau)=\check\lambda_m(\tau) p_m(\tau),
\]
see Section \ref{fredholm}.
Therefore $\tilde V_\tau$ has the same eigenvalues and the eigenvectors
$\bar p_m:=G_\tau p_m$, which establishes  the spectral representation
\begin{equation}\label{spectrum}
\tilde V_\tau \tilde\psi(\cdot,\tau,\cdot)=\sum_m \check\lambda_m(\tau)
\int_{\tset^{3N-1} }\tilde\psi(\cdot,0,\cdot) \cdot \bar p_m  G_\tau^{-2} dX_0
\ \bar p_m(\tau),
\end{equation}
where the
scalar product is
\begin{equation}\label{g_vikt}
\int_{\tset^{3N-1} }\tilde\psi \cdot \bar p_m  G_\tau^{-2} dX_0.
\end{equation}
We note that the weight $G^{-2}$ on the co-dimension one surface $\tset^{3N-1}$
appears  precisely because the operator $\tilde V$ is symmetrized by $G$
and the weight $G^{-2}$ corresponds to the Eulerian-Lagrangian change of coordinates \eqref{euler_lagrange-det}.
The existence of the orthonormal set of eigenvectors and real eigenvalues
makes the operator $\tilde V$ essentially self-adjoint 
in the Lagrangian coordinates and hence the solution operator $S$
becomes unitary in the Lagrangian coordinates.
In the case of the Ehrenfest dynamics the weight is the density $\hat\rho=G^{-2}$;

\subsection{Schr\"odinger eigenvalues as a Hamiltonian system}\label{sec:s-ham}
Our error analysis is based on comparing perturbations of Hamiltonian systems.
This section establishes
the Schr\"odinger eigenvalue characteristics as a Hamiltonian system,
analogously to the Hamiltonian system \eqref{ehrenfest_ham} for Ehrenfest dynamics.

To write the Schr\"odinger eigenvalue characteristics  as a Hamiltonian system
requires to make an asymptotically negligible change in the definition of $V_0$ in \eqref{psi_first_eq}:
the  Eikonal equation used the leading order term 
$V_0=\tilde\psi\cdot V\tilde\psi/\tilde\psi\cdot\tilde\psi$.
An alternative choice is to replace $V$ by  the complete Schr\"odinger operator 
\[
\check V\tilde\psi:=V\tilde\psi -G(2M)^{-1}\sum_n\Delta_{X_n} (\tilde\psi G^{-1})
\]
and symmetrize 
\begin{equation}\label{v_0-def}
V_0:= \frac{\tilde\psi\cdot \check V\tilde\psi 
+\check V\tilde\psi\cdot\tilde\psi}{2\, \int \tilde\psi\cdot\tilde\psi \, G^{-2} dX/\int G^{-2} dX}. 
\end{equation}
The integral in $X$ is used in the denominator
to avoid the  non constant $\tilde\psi\cdot\tilde\psi$;
since $\check V$ is essentially self-adjoint in the weighted space $L^2(G^{-2}dX)$ in \eqref{g_vikt},
corresponding to Lagrangian coordinates, the weighted $L^2$ norm in the denominator
is constant in time and consequently the denominator can be treated as a constant
when differentiating the Hamiltonian to obtain the Hamiltonian system. 

This definition of $V_0$ has the same leading order term as the previous \eqref{v0-def}.
Equation \eqref{psi_first_eq} showed that 
\[
\begin{array}{ll}
0&=(H-E)\Phi\\
&=\Big(\big(-\frac{i}{M^{1/2}}\dot{ \tilde\psi}+ (V-V_0)\tilde\psi 
-\frac{1}{2M}\sum_j G\Delta_{X^j}(\tilde\psi G^{-1})\big) G^{-1}\\
&\quad
+{\big( \frac{|\partial_X\theta|^2}{2} + V_0 -E\big)} \tilde\psi\Big)e^{iM^{1/2}\theta(X)}\, ,
\end{array}
\] for any definition of the scalar $V_0$, which we now use for the new $V_0$.
Note that, it is only when $(V-V_0)\tilde\psi$ is  small that we expect 
the $X$-derivative of $\tilde\psi$ to be bounded,
since otherwise the $e^{iM^{1/2}\theta}$ scale will pollute derivatives: $p\cdot \partial_X\tilde\psi=-iM^{1/2}(\check V-V_0)\tilde\psi$. Therefore, the interesting case is
when $\tilde\psi$ is close to an electron eigenfunction and $V_0$ is near the corresponding eigenvalue.

The goal here is to show that the Eikonal equation, corresponding to the second term above,
becomes a Hamiltonian system for the value function $\theta(X,p,\varphi)$, so that it also generates the 
Schr\"odinger equation in the first term above, analogously to the Ehrenfest dynamics \eqref{ehrenfest_ham}.
Define $\varphi=2M^{1/4}\phi$ and the Hamiltonian
\begin{equation}\label{S_hamiltonian}
H_S(X,p,\varphi) :=\frac{|p|^2}{2} +\frac{\phi\cdot \check V\phi +\check V\phi\cdot \phi}{2} ,
\end{equation}
similar as in \eqref{ehrenfest_ham},
which yields the Hamiltonian system
\begin{equation}\label{s-hamilton}
\begin{array}{ll}
\dot X &=p\\
\dot p&= -\frac{\phi\cdot \partial_X\check V\phi +\partial_X\check V\phi\cdot \phi}{2}\\
i\dot \phi &= M^{1/2}\check V\phi
\end{array}
\end{equation}
using that $\check V$ is symmetric in the weighted scalar product $L^2(dxG^{-2} dX)$, see \eqref{v_sym}.
The phase factor change 
$\tilde\psi:=\phi e^{ iM^{1/2}\int_0^t 
({\phi\cdot \check V\phi +\check V\phi\cdot\phi})/{(2\, \int \phi\cdot\phi \, G^{-2} dX/\int G^{-2} dX) ds
}}$
implies that
\[
\begin{array}{ll}
\dot X &=p\\
\dot p&= -\frac{\tilde\psi\cdot \partial_X\check V\tilde\psi 
+\partial_X\check V\tilde\psi\cdot \tilde\psi}{2\, \int \tilde\psi\cdot\tilde\psi \, G^{-2} dX/\int G^{-2} dX}\\
i\dot {\tilde\psi} &= M^{1/2}\underbrace{(\check V-V_0)}_{=:\tilde V}\tilde\psi
\end{array}
\]
for $V_0$ defined by \eqref{v_0-def}
and 
\[
|\partial_X\theta|^2/2 + V_0=E.
\]
The last two equations imply that $(X,p,\tilde\psi)$ and the weight $G$, defined by \eqref{s_int}
(using the new $V_0$),
generate a WKB function $\Phi:= G^{-1}\tilde\psi e^{iM^{1/2}\theta}$  that solves the Schr\"odinger
equation $H\Phi=E\Phi$ in \eqref{schrodinger_stat}. 
Therefore, the Hamiltonian \eqref{S_hamiltonian} generates an exact solution to the Schr\"odinger equation.

 \subsection{Stability from perturbed Hamiltonians}\label{pert_ham}
This section derives error estimates
of the weight functions $G$
when the corresponding Hamiltonian system is perturbed.

To derive the stability estimate
we consider the 
Hamilton-Jacobi equation  \[H(\partial_Y\theta(Y),Y)=0\] in an optimal control
perspective, with the corresponding Hamiltonian system
\[
\begin{array}{ll}
\dot Y_t &= \partial_{q} H(q_t,Y_t)   \\ 
\dot q_t &= -\partial_{Y} H(q_t,Y_t). 
\end{array}
\]
Define  the "value" function
\[
\theta(Y_0)= \theta(Y_t) - \int_0^t h(q_s,Y_s) ds
\]
where the "cost" function defined by
\[
h(q,Y):=  q\bullet \partial_q H(q,Y)-H(q,Y)
\]
satisfies the Pontryagin principle (related to the Legendre transform)
\begin{equation}\label{pontry}
H(q,Y)= \sup_Q \big( q\bullet \partial_Q H(Q,Y) - h(Q,Y)\big).
\end{equation}
Let $\theta\Big|_I$ be defined by the hitting problem
\[
\theta(Y_0 )=\theta(Y_\tau) -\int_0^\tau  h(q_s,Y_s) ds
\]
using the hitting time $\tau$ on the return surface $I$
\begin{equation}\label{hitting}
\tau:= \inf\{ t \ : \ Y_t\in I\ \& \ t>0\}.
\end{equation}
Define analogously for a perturbed Hamiltonian $\tilde H$,
the dynamics $(\tilde Y_t,\tilde q_t)$, the value function $\tilde\theta$ and
the cost function $\tilde h$.

We can think of the difference $\theta-\tilde\theta$ as composed by perturbation of the boundary
data (on the return surface $I$) and perturbations of the Hamiltonians. 
The difference  of the value functions due to the perturbed Hamiltonian satisfy the stability estimate
\begin{equation}\label{theta_alfa}
\begin{array}{ll} 
\theta(Y_0)-\tilde\theta(Y_0) &\ge  \theta(\tilde Y_{\tilde\tau})-\tilde\theta(\tilde Y_{\tilde\tau}) +
\int_0^{\tilde\tau} (H-\tilde H)\big(\partial_Y\theta(\tilde Y_t),\tilde Y_t\big) dt \\
\theta(Y_0)-\tilde\theta(Y_0) &\le \theta( Y_{\tau})-\tilde\theta( Y_{\tau}) +
\int_0^{\tau} (H-\tilde H)\big(\partial_Y\tilde\theta(Y_t), Y_t\big) dt 
\end{array}
\end{equation}
with a difference of the Hamiltonians evaluated along the same solution path.
This result follows by differentiating the value function along a path and
using the Hamilton-Jacobi equations, see Remark \ref{hj_stab} and \cite{css}.

Assume that
\begin{equation}\label{H_felet}
\sup_{(q,Y)=(\partial_Y\tilde\theta(Y_t), Y_t),\  (q,Y)=(\partial_Y\theta(\tilde Y_t), \tilde Y_t)}
 |(H-\tilde H)(q,Y)|=\mathcal O(M^{-\alpha}).
\end{equation}
The stability estimate, for all characteristic paths and following subsequent
hitting points on $I$ (as in the discussion after \eqref{char_eq}), then yields the same  estimate of the difference in the boundary data
\[
 \|\theta-\tilde\theta\|_{L^\infty(I)}=\mathcal O(M^{-\alpha})
\]
provided the maximal hitting time $\tau$ is bounded,
which we assume. If the return surface is the plane for which
the $X_1$ particle has its first component equal to its initial value, 
the hitting time is the time it takes until this particle has the first
component equal to that initial value again; since the $X$-dynamics
does not explicitly depend on $M$ it seems reasonable that one
can find a return surface such that the hitting times are bounded.
Next, the representation can be applied to interior points
to obtain 
\[
\|\theta-\tilde\theta\|_{L^\infty} = \mathcal O(M^{-\alpha}). 
\] 

When the value functions $\theta$ and $\tilde\theta$ are smoothly differentiable in $X$
($X$ is a part of the $Y$ coordinate) with 
derivatives bounded uniformly in $M$, 
the stability estimate \eqref{theta_alfa} implies that also the difference of the second derivatives
has the bound
\[
\|\sum_n \Delta_{X_n}\theta -\sum_n \Delta_{X_n}\tilde\theta \|_{L^\infty}=\mathcal{O}(M^{-\alpha+\delta}), \mbox{ for any } \delta>0.
\]
We will also use that
\begin{equation}\label{imag_alfa}
M^{-1/2} \Im (\psi\cdot\sum_n\Delta_{X_n}\psi)/(\psi\cdot\psi)=\mathcal{O}(M^{-\alpha})
\end{equation}
which is verified in Section \ref{imag_rem}.

Our goal is to analyze the  density function $\rho$ satisfying the convection equation
\eqref{s_int}, i.e.
\begin{equation}\label{g_hj}
\partial_X\theta\bullet \partial_X \log \rho =- \sum_n \Delta_{X_n}\theta + d
\end{equation}
where the function $d=d_{BO}, d= d_E$ or $d=d_S$ with
\[
\begin{split}
d_{BO} &=d_{E}=0\\
d_{S} &=  -M^{-1/2} \Im (\psi\cdot\sum_n\Delta_{X_n}\psi)/(\psi\cdot\psi)
\end{split}
\]
in the case of Born-Oppenheimer, Ehrenfest or Schr\"odinger dynamics, respectively.
%
We have $\partial_X\theta\bullet \partial_X \log\rho =d\log\rho(X_t)/dt$ and
we interpret equation \eqref{g_hj} as a Hamilton-Jacobi equation
and similarly for $\log\tilde \rho$. Then the stability of Hamilton-Jacobi equations, as in \eqref{theta_alfa},
can be applied to \eqref{g_hj}, with the Hamiltonian 
\[
H_\rho (\partial_X \log\rho,X):=\partial_X\theta(X)\bullet \partial_X\log\rho(X)  + \sum_n \Delta_{X_n}\theta(X) - d,\] 
to obtain the sought error estimate
\begin{equation}\label{g_fel}
\|\log\rho-\log\tilde\rho\|_{L^\infty} 
=\mathcal O(M^{-\alpha+\delta}).
\end{equation}
In this sense we will use that an $\mathcal O(M^{-\alpha})$ perturbation of the Hamiltonian yields an error
estimate of almost the same order for the difference of the corresponding 
densities $\rho-\tilde\rho$.

The Hamiltonians we use are
\[\begin{array}{ll}
H_S& = \frac{|p|^2}{2} + \frac{1}{2} \big( \phi\cdot \check V(X)\phi 
+ \check V(X)\phi\cdot \phi\big) -E \\
H_E & = \frac{|p|^2}{2} +  \phi\cdot V(X)\phi -E \\
H_{BO} & = \frac{|p|^2}{2} +  \lambda_0(X) -E
\end{array}\]
with the cost functions
\[\begin{array}{ll}
h_S& = E + \frac{|p|^2}{2}  + \frac{1}{2} \big( \phi_i\cdot \check V(X)\phi_i 
+ \check V(X)\phi_i\cdot \phi_i  -\phi_r\cdot \check V(X)\phi_r 
-\check V(X)\phi_r\cdot \phi_r \big) \\
h_E& = E + \frac{|p|^2}{2}  + 
\big( \phi_i\cdot V(X)\phi_i 
-\phi_r\cdot V(X)\phi_r \big) \\
h_{BO}& = E + \frac{|p|^2}{2} - \lambda_0(X)
\end{array}\]
and the primal and dual variable $(Y;q)= (X,2^{1/2}M^{-1/4}\phi_r; p,2^{1/2}M^{-1/4}\phi_i)$ in the case of
Schr\"odinger and Ehrenfest dynamics, as in \eqref{E-hamiltonian},
and the variable $(Y;q)=(X,(\psi_0)_r;p,(\psi_0)_i)$ for the Born-Oppenheimer dynamics.
For the Born-Oppenheimer case the electron wave function is the eigenstate $\psi_0$;
one can identify 
the Born-Oppenheimer dynamics with an Ehrenfest dynamics where the mass is set to infinity,
since in this limit the Ehrenfest wave function becomes the eigenfunction, see Section \ref{sec:bo}.

\begin{remark}\label{hj_stab}
This remark derives the stability estimate \eqref{theta_alfa}.
The definition of the value functions imply
\begin{equation}
  \label{eq:err2}
  \begin{array}{ll}
    &\underbrace{\tilde \theta(\tilde Y_{\tilde\tau}) 
    -\int_0^{\tilde\tau} \tilde h(\tilde q_t,\tilde Y_t)\,{\rm d}t 
    }_{\tilde\theta(\tilde Y_0)} - 
    \underbrace{ \big(\theta( Y_\tau) -\int_0^\tau h( q_t, Y_t)\,{\rm d}t \big) }_{\theta(Y_0)}\\
    &=
    -\int_0^{\tilde\tau} \tilde h(\tilde q_t,\tilde Y_t) \,{\rm d}t  +\theta(\tilde Y_{\tilde \tau})
    -\underbrace{ \theta(Y_0)}_{ \theta(\tilde Y_0)} 
     +  \tilde\theta(\tilde Y_{\tilde \tau}) - \theta(\tilde Y_{\tilde \tau})\\
    &=  -\int_0^{\tilde\tau} \tilde h(\tilde q_t,\tilde Y_t) \,{\rm d}t  
    +  \int_0^{\tilde\tau} {\rm d}\theta(\tilde Y_t)
     +  \tilde\theta(\tilde Y_{\tilde \tau}) - \theta(\tilde Y_{\tilde \tau})\\
    &=\int_0^{\tilde\tau}\underbrace{ -\tilde h(\tilde q_t,\tilde Y_t) +
      \partial_Y \theta (\tilde Y_t) \bullet 
      \partial_q \tilde H(\tilde q_t,\tilde Y_t) }_{\le 
      \tilde H\big(\partial_Y \theta(\tilde Y_t), \tilde Y_t\big) }\,{\rm  d}t
       +  \tilde\theta(\tilde Y_{\tilde \tau}) - \theta(\tilde Y_{\tilde \tau})\\
    &\le \int_0^{\tilde\tau} (\tilde H-H)\big(\partial_Y \theta(\tilde Y_t), \tilde Y_t\big)\,{\rm  d}t
     +  \tilde\theta(\tilde Y_{\tilde \tau}) - \theta(\tilde Y_{\tilde \tau}),
  \end{array}
\end{equation}
where the Pontryagin principle \eqref{pontry} yields the inequality and
we use the Hamilton-Jacobi equation \[H(\partial_Y \theta(\tilde Y_t), \tilde Y_t)=0.\]
To establish the lower bound, replace $\theta$ along $\tilde Y_t$ by
$\tilde\theta$ along $Y_t$ and repeat the derivation above.
\end{remark}

\subsection{The Born-Oppenheimer approximation}\label{sec:bo}
\subsubsection{With an electron eigenvalue gap}
To better understand the evolution \eqref{ehrenfest_dyn} of  $\hat\psi$
and estimate the error term in \eqref{h-e-approx}, we use the decomposition
$\hat\psi=\psi_0 \oplus \psi^\bot$, where $\psi_0(t)$ is an electron eigenvector of $V_t=V(\hat X_t)$,
satisfying $V_t\psi_0(t)=\lambda_0(t)\psi_0(t)$ in $L^2(dx)$ for an eigenvalue $\lambda_0(t)\in\rset$,
and $\psi^\perp$ is chosen orthogonal to $\psi_0$ in the sense $\psi^\perp\cdot\psi_0=0$.
We assume that the electron eigenfunction $\psi_0(X):\tset^{3J}\rightarrow \rset$
is smooth as a function of $X$.
This {\it Ansatz} is motivated by the residual
\begin{equation}\label{R_residual}
R\psi_0:= \dot\psi_0 + i M^{1/2}(V-V_0)\psi_0
\end{equation}
being small. Below we verify that  $\psi^\bot$ is $\mathcal O(M^{-1/2})$ with a spectral gap assumption
and $\mathcal O(M^{-1/4})$ for a case without a spectral gap; 
in the case of a spectral gap, this estimate yields
\[
\begin{array}{ll}
\dot\psi_0 \cdot \psi_0& =-\dot \psi^\bot\cdot \psi^\bot = \mathcal{O}(M^{-1/2})\\
M^{1/2}(V-V_0)\psi_0 
&=\mathcal{O}(M^{-1/2}),
\end{array}
\]
 since
 \[
 V_0=\frac{(\psi_0+\psi^\bot)\cdot V_\tau(\psi_0+\psi^\bot)}{(\psi_0+\psi^\bot)\cdot (\psi_0+\psi^\bot)}
 =\lambda_0(\tau)+\mathcal{O}(M^{-1})
 \]
 for $\psi_0=\mathcal{O}(1)$,
 and \[
 \dot \psi^\bot\cdot \psi^\bot = \dot X\bullet\partial_X\psi^\bot\cdot\psi^\bot= \mathcal{O}(M^{0} M^{-1/2}).
 \]

 The Schr\"odinger equation   $R\hat\psi=0$  in \eqref{ehrenfest_dyn} implies that
the perturbation $\psi^\bot$ satisfies the following Schr\"odinger equation with the source term $iR\psi_0$
\begin{equation}\label{psi_bot-ekv}
i\dot{\psi}^\bot=M^{1/2}(V-V_0)\psi^\bot - iR\psi_0
\end{equation}
and the solution representation 
\begin{equation}\label{r_int}
\psi^\bot(t)=\hat S_{t,0}\psi^\bot(0)
-\int_0^t \hat S_{t,s} R\psi_0(s)
ds
\end{equation}
for the solution operator $\hat\psi_t=:\hat S_{t,0}\hat\psi_0$.
Split the source term into its projection on $\psi_0$ and its orthogonal part
\[
R\psi_0=\frac{R\psi_0\cdot \psi_0}{\psi_0\cdot\psi_0}\ \psi_0  \oplus R\psi_0^\bot 
\]
and note that it is enough to study the projected source term $R\psi_0^\bot$ to determine $\psi^\bot$ on
the orthogonal complement of $\psi_0$.
Integrate by parts to obtain
\begin{equation}\label{s_part}
\begin{array}{ll}
\int_0^t \hat S_{t,s} R\psi_0^\bot(s) ds &= 
\int_0^t \underbrace{\frac{d}{ds} (\hat S_{t,s})iM^{-1/2}(V-V_0)^{-1}}_{=\hat S_{t,s}}  R\psi_0^\bot(s) ds \\
&=iM^{-1/2}\Big[\hat S_{t,s}(V-V_0)^{-1}  R\psi_0^\bot(s)\Big]_{s=0}^t \\
&\quad -
iM^{-1/2}\int_0^t  \hat S_{t,s}\frac{d}{ds}\big((V-V_0)^{-1}  R\psi_0^\bot\big)(s) ds.
\end{array}
\end{equation}
If there is a {\it spectral gap }
\begin{equation}\label{gap_c}
\inf_{n\ne 0, t>0} |\lambda_n(\hat X_t)-\lambda_0(\hat X_t)|>c,
\end{equation}
for a positive constant $c$,
we have
\[
(V-V_0)^{-1}R\psi_0^\bot=\sum_{n\ge 0}(V-V_0)^{-1}\psi_n \ R\psi_0^\bot\cdot \psi_n  
=\sum_{n\ne 0}(\lambda_n-V_0)^{-1} \psi_n \ R\psi_0^\bot\cdot \psi_n\ 
=\mathcal O(1).
\]
We have by \eqref{r_int} and \eqref{s_part}
\[
\begin{split}
\psi^\perp(t) &= \hat S_{t,0}\big(\psi^\perp(0) -iM^{-1/2}(V-V_0)^{-1}R\psi_0^\perp(0)\big)
+ iM^{-1/2}(V-V_0)^{-1}R\psi_0^\perp(t)\\
& \quad 
- \underbrace{iM^{-1/2}\int_0^t  \hat S_{t,s}\frac{d}{ds}\big((V-V_0)^{-1}  R\psi_0^\bot\big)(s) ds}_{=\mathcal{O}(M^{-1})},
\end{split}
\]
where the last integral can be integrated by parts again as in \eqref{s_part} to reduce the integral with a factor
$M^{-1/2}$.
By choosing $\psi^\perp(0)=iM^{-1/2}(V-V_0)^{-1}R\psi_0^\perp(0)$ 
we have
\begin{equation}\label{psi_perp_ekv}
\psi^\perp(t)=iM^{-1/2}(V-V_0)^{-1}R\psi_0^\perp(t) + 
\underbrace{iM^{-1/2}\int_0^t  \hat S_{t,s}\frac{d}{ds}\big((V-V_0)^{-1}  R\psi_0^\bot\big)(s) ds}_{\mathcal{O}(M^{-1})}. 
\end{equation}
We  may use that the
paths $\hat X_t$ return to the co-dimension one "return" surface $I$, defined in \eqref{char_eq},
after time $t=\mathcal{O}(1)$ which by \eqref{s_part} then yields
\begin{equation}\label{sr_int}
\|\int_0^t \hat S_{t,s} R\psi_0^\bot(s) ds\|_{L^2}=\mathcal O(M^{-1/2}),
\end{equation}
and if longer return times are needed the integration by parts can be iterated $k$ times to gain
a factor of $M^{-k/2}$ in the integral. 
The function $\psi^\perp$ is then determined by \eqref{r_int} from the initial $\hat X\in I$ and its successive returns 
to the set $I$, similarly as the phase function $\theta$ in \eqref{char_eq}.
Therefore, the $L^2$ unitarity of $\hat S_{t,0}$  together with the bound \eqref{sr_int} imply in \eqref{r_int} that
the Ehrenfest approximation $\hat\psi=\psi_0+\hat{\psi}^\perp$
satisfies
\begin{equation}\label{zeta_uppskatt}
\|{\psi}^\perp\|_{L^2}=\mathcal{O}(M^{-1/2}),
\end{equation}
where \eqref{psi_perp_ekv} shows that the error in different hitting time intervals  do not substantially 
accumulate in time.
Since the function $\psi_0$ depends smoothly on $X$ and a derivative of the
integral in \eqref{psi_perp_ekv} yields at most a factor $M^{1/2}$ (when differentiating the solution operator $\hat S$),
equation \eqref{psi_perp_ekv} shows that the derivative with respect to the Lagrange coordinates
satisfies
\[
\|\partial \psi^\bot\|_{L^2} = \mathcal{O}(M^{-1/2}),
\]
and integrating by parts once more in \eqref{psi_perp_ekv} also gives
\begin{equation}\label{bot_bound}
\|\partial^2\psi^\bot\|_{L^2} = \mathcal{O}(M^{-1/2}).
\end{equation}

\subsubsection{With crossing electron eigenvalues}
Since the solution operator is unitary it can be written as a highly oscillatory
exponential:  $\hat S_{t,s}=e^{iM^{1/2}Q(s)}$, where $Q$ is an essentially
self-adjoint operator, cf. \cite{berezin}.
Therefore, the method of stationary phase, cf. \cite{maslov} and Section \ref{stationary_phase}, can be applied to extend \eqref{s_part} to the case when
$\lambda_n-V_0$ has finitely many non degenerate critical points $\sigma_k$, 
where the eigenvalue crossings satisfy
\begin{equation}\label{nogap_c}
|\frac{d}{ds}\big(\lambda_n(s)-\lambda_0(s)\big)|_{s=\sigma_k}>c \quad \mbox{
for a positive constant $c$ and $n\ne 0$.}
\end{equation}
With this assumption, the method of stationary phase, see Section \ref{stationary_phase}, implies 
\[
\int_0^t \hat S_{t,s} R\psi_0^\bot(s) ds=
\mathcal O(M^{-1/4}),
\]
and
\begin{equation}\label{psi_cr}
\|\psi^\bot\|=\mathcal O(M^{-1/4}),
\end{equation}
since by letting the hitting surface $I$ include the points where the electron eigenvalue cross we have
\[
\psi^\bot(t) = \hat S_{t,0} \big( \psi^\perp(0) + M^{-1/4} r \big)
\]
for a smooth and bounded function $r$ of $X$ and for times $t$ close to the time of a path hitting crossing eigenvalues.
To obtain bounds on the derivatives, we note that if the initial data on the hitting surface 
is chosen as $\psi^\perp(0)=-M^{-1/4} r+ \mathcal{O}(M^{-5/4})$, we obtain
\begin{equation}\label{stat_bot}
\|\partial^2\psi^\bot\|_{L^2}=\mathcal{O}(M^{-1/4}),
\end{equation}
since two derivatives of the solution operator increases the result by at most a factor $\mathcal{O}(M)$.

\subsubsection{Difference of  Hamiltonians}
To estimate the difference in the Hamiltonians we use the bounds \eqref{bot_bound} and \eqref{stat_bot} on derivatives
\[\hat\psi\cdot \hat G\sum_k\Delta_{X_k}(\hat G^{-1}\hat\psi) =\mathcal{O}(1),\]
in the Lagrange coordinates. 
We conclude that
\begin{equation}\label{H_fel}
\begin{array}{ll}
(H_S-H_E)(\hat X,\hat p,\hat\psi)
&=\frac{1}{2M} \hat\psi\cdot \hat G\sum_n\Delta_{X_n}(\hat G^{-1}\hat \psi)=
\left\{\begin{array}{ll}
\mathcal O(M^{-1}) & \mbox{spectral gap}\\
\mathcal O(M^{-1}) & \mbox{crossings}\\
\end{array}\right.\\
(H_E-H_{BO})(\hat X,\hat p,\hat\psi)  
&= \hat\psi\cdot (V-\lambda_0)\hat\psi =\psi^\bot\cdot (V-\lambda_0)\psi^\bot=
\left\{\begin{array}{ll}
\mathcal O( M^{-1}) & \mbox{spectral gap}\\
\mathcal O(M^{-1/2}) & \mbox{crossings.}\\
\end{array}\right.
\end{array}
\end{equation}
The largest bound $\mathcal{O}(M^{-1/2})$ for crossing eigenvalues
can possibly  be improved, since when the eigenvalues cross
we did not use the smallness of $\psi^\bot\cdot (V-V_0)\psi^\bot$  in the neighborhood of
crossing eigenvalues but only
the size of $\psi^\bot=\mathcal{O}(M^{-1/4})$: in a neighborhood of
the crossing point the function $\psi^\perp$ increases from $\mathcal{O}(M^{-1/2})$ outside
to $\mathcal{O}(M^{-1/2})$ inside, for  the component corresponding to the crossing eigenvalue,
while $(V-\lambda_0)\psi^\perp$ vanishes asymptotically at the crossing; a careful
balance of the size of this neighborhood and  of $(V-\lambda_0)^{-1}R\psi_0^\perp$ outside
could give a better estimate of $\psi^\bot\cdot (V-\lambda_0)\psi^\bot$.

\subsubsection{Estimates of the Hamiltonian for the density} \label{imag_rem}
We have $\psi=G^{-1}(\psi_0+\psi^\perp)$. To conclude that
\eqref{imag_alfa} holds, we split into three terms where the first is
\[
\Im \big(G^{-1}\psi_0\cdot \sum_n\Delta_{X_n} (G^{-1}\psi_0)\big) =0
\]
since all functions here are real valued; then we have
\begin{equation}\label{psi_3_4}
M^{-1/2}\Im\big( G^{-1}\psi_0\cdot \sum_n\Delta_{X_n} (G^{-1}\psi^\perp) 
+ G^{-1}\psi^\perp \cdot \sum_n\Delta_{X_n} (G^{-1}\psi_0 \big)=
\left\{\begin{array}{ll}
\mathcal O( M^{-1})  &\mbox{spectral gap}\\
\mathcal O(M^{-3/4}) & \mbox{crossings}\\
\end{array}\right.
\end{equation}
and
 \[
 M^{-1/2} \Im \big({G^{-1}\psi^\perp} \cdot  \sum_n\Delta_{X_n} (G^{-1}\psi^\perp)\big)= 
 \left\{\begin{array}{ll}
\mathcal O( M^{-3/2})  &\mbox{spectral gap}\\
\mathcal O(M^{-1}) & \mbox{crossings.}\\
\end{array}\right.
 \]
by the derivative bounds \eqref{bot_bound} and \eqref{stat_bot}.
We conclude that
\begin{equation}\label{rho_3_4}
M^{-1/2} \Im (\psi\cdot\sum_n\Delta_{X_n}\psi)/(\psi\cdot\psi)=
\left\{\begin{array}{ll}
\mathcal O( M^{-1})  &\mbox{spectral gap}\\
\mathcal O(M^{-3/4}) & \mbox{crossings.}\\
\end{array}\right.
\end{equation}
 

 \subsection{The stationary phase method}\label{stationary_phase}
 \subsubsection{Real valued phases}
 The theory of semi-classical approximations \cite{maslov,duistermaat} 
 use the method of stationary
 phase  to reduce the study of oscillatory 
 integrals 
 \begin{equation}\label{osc_int}
 \int_{\rset} e^{iM^{1/2}Q(s)} f(s) ds
 \end{equation}
 to points $\sigma$
 where the real valued phase function $Q$ is stationary, i.e. $Q'(\sigma)=0$, as follows;
 such integrals over domains away 
 from the critical points $\sigma_k$, yields by the integration by parts \eqref{s_part} 
  contributions of size $\mathcal O(M^{-1/2})$.
 In a neighborhood of a critical point $\sigma=0$,
 the phase can by Taylors formula by written
  \[
 Q(s)=Q(0) + \frac{1}{2}\underbrace{\int_0^1Q''(ts)dt}_{=:\bar Q(s)}\ s^2.
 \]
Introduce the change of variables 
$\tilde s:= s(\bar Q(s)/\bar Q(0))^{1/2}$ and its inverse map $s(\tilde s)$ to write
\[
\int_{0}^\infty e^{iM^{1/2}s^2\bar Q(s)/2 } f( s) ds
= \int_{0}^\infty e^{iM^{1/2}\tilde s^2\bar Q(0)/2 }  \underbrace{f\big(s(\tilde s)\big) s'(\tilde s)}_{=:\tilde f(\tilde s)}d\tilde s
\]
The corresponding error term based on the integral over the domain $(-\infty,0)$ defines $\tilde f$ on $(-\infty,0)$.
Fourier transformation $\mathcal F$, following \cite{duistermaat}, 
 shows  the expansion
 \begin{equation}\label{stat_fas_metod}
 \begin{array}{ll}
&\int_{\rset} e^{iM^{1/2}s^2Q''(0) } \tilde f( s) ds\\
& = \int_\rset \mathcal F\big(e^{iM^{1/2}s^2Q''(0) }\big)(\omega)
\mathcal F^{-1}\big(\tilde f( s)\big)(\omega) d\omega
\\& =
\int_\rset (2\pi)^{1/2} M^{-1/4} | Q''(0)|^{-1/2} e^{i\pi {\tiny\mbox{sgn}} Q''(0)/4}
e^{-i(Q'')^{-1}M^{-1/2}\omega^2/2}  
\mathcal F^{-1}\big(\tilde f( s)\big)(\omega) d\omega\\
& \sim (2\pi)^{1/2} M^{-1/4} |\, Q''(0)|^{-1/2} e^{i\pi {\tiny\mbox{sgn}} Q''(0)/4}
\sum_{k=0}^\infty \frac{ M^{-k/2}}{k!} (iQ''(0)^{-1}\partial_{ s}^2)^k\tilde f(s)\Big|_{s=0}.
\end{array}
\end{equation}
The original integral \eqref{osc_int} can by a smooth partition of unity be
split into a sum of the integrals over domains containing only one or no critical point.

The stationary phase method can be extended to oscillatory integrals
in $\rset^{3N}$, with the factor $M^{-1/4}$ replaced by $M^{-3N/4}$
and $|Q''(0)|$ replaced by $|\det Q''(0)|$ in \eqref{stat_fas_metod}, cf. \cite{duistermaat,maslov}.

\subsubsection{Self-adjoint phases}
In this section we extend the stationary phase method
to the phase generated by the solution operator $\hat S$.
Since $\hat S$ is unitary, we can write the 
solution operator $\hat S_{t,s}=e^{iM^{1/2}Q_{t,s}}$,
where $Q_{t,s}$ is essentially self-adjoint in $L^2(dx)$.
Therefore $Q_{t,s}$ has real eigenvalues $\{\check\lambda_n(s)\}$
with corresponding orthonormal  real eigenfunctions $\{\check p_n(s)\}$
(satisfying $Q_s\check p_n(s)=\check\lambda_n(s)\check p_n(s)$) that transforms
the operator valued phase to a sum of real valued phases
\[
\int_0^t \hat S_{t,s} R_s ds= \int_0^t e^{iM^{1/2}Q_s} R_s ds
=\sum_n\int_0^t e^{iM^{1/2}\check\lambda_n(s)} 
\underbrace{(\check p_n(s)\cdot R_s)
\ \check p_n(s)}_{=(\check p_n(t)\cdot R_s)\ \check p_n(t)} ds
\]
and the stationary phase method can now be applied to each term separately.

Note that $\check p_n(t)$ does not depend on $s$, since the data
for $\check p_n$ is that they are the eigenfunctions of $\dot Q_{t,t}=V-V_0$ at time $t$:
the eigenvalue relation
\[
(t-s)^{-1}Q_{t,s}\check p_n(s)=(t-s)^{-1}\check\lambda_n(s)\check p_n(s)
\]
and $(t-s)^{-1}Q_{t,s}\rightarrow V-V_0$ as $s\rightarrow t$
shows that $\check p_n(t)$ are the eigenfunctions to $V-V_0$;
let $f_n(t):=e^{iM^{1/2}\check\lambda_n(s)}\check p_n(s)$, then 
the continuity at $t$ implies $f_n(t)=\check p_n(t)$, so that 
$\check p_n(t)=e^{iM^{1/2}\check\lambda_n(s;t)}\check p_n(s;t)$, which
yields $(\check p_n(s)\cdot R_s)
\ \check p_n(s)=(\check p_n(t)\cdot R_s)\ \check p_n(t)$.

The definition of $Q_s$ implies that
\[
\begin{array}{ll}
\frac{d}{ds}Q_s &=V(\hat X_s)-V_0(\hat X_s)\\
\frac{d^2}{ds^2}Q_s &=\frac{d}{ds}\big( V(\hat X_s)-V_0(\hat X_s)\big),
\end{array}
\]
and by differentiation of the  eigenvalue relation we obtain
\[
\frac{d}{ds}\check\lambda_n(s)=\check p_n\cdot (V-V_0)\check p_n(s).
\]
When $t\rightarrow s$ we obtain $\dot{\check\lambda}_n=\lambda_n$
and $\ddot{\check\lambda}_n=\dot\lambda_n$, so that
the condition \eqref{gap_cr1} for crossing eigenvalues of $\lambda_n$
yields the condition for non degenerate critical points of $\check\lambda_n$.

Our analysis shows that 
\[
\check p_0(t)=e^{iM^{1/2}\check\lambda_n(s;t)}\check p_0(s)=\hat S_{t,s}\check p_0(s)\]
which by \eqref{psi_cr} satisfies 
$\check p_0=\psi_0 +\mathcal O(M^{-1/4})$.
Analogously, with a similar assumption on non degenerate
critical points in $\check\lambda_m-\check\lambda_n$, for all $m$ and $n$,
 we obtain for the other eigenfunctions
$e^{iM^{1/2}\int_s^t V_n(r)-V_0(r) dr}\check p_n(s)= \psi_n +\mathcal O(M^{-1/4})$,
and we can conclude that $\check p_n$ is $\mathcal O(M^{-1/4})$ close to the electron
eigenfunction $\psi_n$.

 \subsubsection{Analysis of the Schr\"odinger dynamics.} 
 The analysis above can be  applied to the exact solution operator 
 $S_{t,s}=e^{iM^{1/2}Q_s}$ and $\tilde V$
 replacing the Ehrenfest operator $\hat S$ and $V-V_0$.
 Let $\tilde\psi=\psi_0\oplus\tilde\psi^\bot$, then
$\tilde\psi_t=S_{t,0}\tilde\psi_0$
and $R\psi_0=\dot \psi_0 - M^{1/2}\tilde V\psi_0$. 
Integrate by parts to obtain
\begin{equation}\label{S_osc}
\begin{array}{ll}
\int_0^t S_{t,s} R\psi_0^\bot(s) ds &= 
\int_0^t \underbrace{\frac{d}{ds} (S_{t,s})iM^{-1/2}\tilde V^{-1}}_{=S_{t,s}}  R\psi_0^\bot(s) ds \\
&=iM^{-1/2}\Big[S_{t,s}\tilde V^{-1}  R\psi_0^\bot(s)\Big]_{s=0}^t \\
&\quad -
iM^{-1/2}\int_0^t  \hat S_{t,s}\frac{d}{ds}\big(\tilde V^{-1}  R\psi_0^\bot\big)(s) ds
\end{array}
\end{equation}
and use
\[
\begin{array}{ll}
\tilde V^{-1}R\psi_0^\bot
&=\big(I +(\tilde V-V-V_0)(V-V_0)^{-1}\big)^{-1}(V-V_0)^{-1}R\psi_0^\bot\\
&=\sum_{n\ne 0} \big(I +(\tilde V-V-V_0)(V-V_0)^{-1}\big)^{-1} (\lambda_n -V_0)^{-1}
\psi_n R\psi_0^\bot\cdot \psi_n\\
&=\sum_{n\ne 0}  (\lambda_n -V_0)^{-1} \psi_n R\psi_0^\bot\cdot \psi_n 
+\mathcal O(M^{-1})
\end{array}
\]
to deduce as before that $\tilde\psi^\bot=\mathcal O(M^{-1/2})$
when the spectral gap condition holds and $\psi_n$ are smooth.

In the case of crossing eigenvalues, the method of stationary phase,
projected to the eigenfunction of the singularity,
 is applicable with  $S$, in the weighted scalar product
 $\int v\cdot w G^{-2} dX$ where the symmetric generator is $(V-V_0)G^{-1}\tilde\psi
 -(2M)^{-1} \sum_n\Delta_{X_n} (G^{-1}\tilde\psi)$.
 In the new variable $\psi=G^{-1}\tilde\psi$ the operator
 becomes $V-V_0  -(2M)^{-1} \sum_n\Delta_{X_n}$
 with its time derivative equal to $d(V - V_0)/dt$, as in the Ehrenfest case.
 The method of stationary phase for $S$ 
 can therefore be used to study the behavior of the oscillatory integral
 $\int_0^t S_{t,s}R\psi_0^\bot$ in the subspace of a
 single eigenfunction $\psi_n$,  in a neighborhood 
 a critical point where $\lambda_n-V_0$ vanish, $n\ne 0$;
 in the orthogonal subspace to $\psi_n$ and $\psi_0$,
 the oscillatory integral can be estimated by integration by
 parts as in \eqref{S_osc}.
 With such a projection into
 a single eigenfunction direction, the source term is 
 $R\psi_0^\bot\cdot \psi_n$ and  the non degeneracy condition for  the stationary phase method becomes 
 \[
 |\psi_n\cdot \ddot Q(0)\psi_n|=|\psi_n\cdot d(V-V_0)/dt\psi_n|
 =|\dot\lambda_n -\dot V_0|>c.
 \]
 Note that the projection to a one dimensional subspace is
 possible when the critical points of the different eigenvalues are
 separated. If two or more eigenvalues have the same critical point,
 a more careful study would be needed.

 In conclusion we have the same estimates
 for $\tilde\psi=:\psi_0\oplus \tilde\psi^\bot $ as for $\hat\psi$:
 \begin{equation}\label{s_bound}
 \begin{array}{ll}
& \|\tilde\psi^\bot\|_{L^2}=\mathcal O(M^{-1/2}) \\
\end{array}
\end{equation}
with a spectral gap,  respectively 
 \begin{equation}\label{s_bound_cr}
 \begin{array}{ll}
& \|\tilde\psi^\bot\|_{L^2}=\mathcal O(M^{-1/4}) \\ 
\end{array}
\end{equation}
with non degenerate eigenvalue crossings, and
as in \eqref{psi_3_4}, we also obtain 
\begin{equation}\label{S_g_termer}
\frac{1}{2M} \tilde\psi\cdot \sum_n\Delta_{X_n}(G^{-1}\tilde\psi) =
\left\{ 
\begin{array}{ll}
\mathcal O(M^{-1}) & \mbox{ with a spectral gap}\\
\mathcal O(M^{-1}) & \mbox{ with eigenvalue crossings.}
\end{array}\right.
\end{equation}

\subsection{Error estimates for the densities}
The densities are
\[\begin{array}{lll}
\rho_S &= G^{-2}\tilde\psi\cdot\tilde\psi &\mbox{for the Schr\"odinger equation}\\
\rho_E &= \hat G^{-2} &\mbox{for the Ehrenfest dynamics}\\
\rho_{BO} &= \bar G^{-2} &\mbox{for the Born-Oppenheimer dynamics.}
\end{array}\]
We have from \eqref{H_fel}, (\ref{s_bound}-\ref{S_g_termer}) and \eqref{g_fel}
\[\begin{array}{ll}
G^{-2}&= \hat G^{-2} +\mathcal O(M^{-\alpha})\\
G^{-2}&= \bar G^{-2} +\mathcal O(M^{-\alpha})
\end{array}\]
and by (\ref{s_bound}-\ref{s_bound_cr})
\begin{equation}\label{tilde_psi_m}
\tilde\psi\cdot\tilde\psi= 1 + \mathcal O(M^{-\alpha}),
\end{equation}
which proves
\[
\begin{array}{ll}
\rho_S &=\rho_E +\mathcal O(M^{-\alpha})\\
\rho_S &=\rho_{BO} +\mathcal O(M^{-\alpha}),
\end{array}\]
where 
\[
\alpha = \left\{ \begin{array}{ll}
1 & \mbox{for the spectral gap condition \eqref{gap_c1}}\\
1/2 & \mbox{for non degenerate eigenvalue crossings \eqref{gap_cr1}.} \\
\end{array}
\right.
\]
The sharper estimate for the density obtained using  \eqref{g_fel} and \eqref{rho_3_4}, 
instead of \eqref{tilde_psi_m}, improves the error bound to
$\mathcal{O}(M^{-3/4})$ in \eqref{tilde_psi_m}
for the case of Ehrenfest dynamics approximating Schr\"odinger observables with crossing electron eigenvalues.

\section{Approximation of  stochastic WKB states by stochastic dynamics}\label{md_stok}
In this section we analyze a situation when the WKB eigenstate,
(corresponding to a degenerate eigenvalue of the Schr\"odinger equation
\eqref{schrodinger_stat})
is perturbed from the corresponding electron ground state by thermal fluctuations, which will lead to
molecular dynamics in the canonical ensemble with constant number of particles, volume and temperature.
To determine the stochastic data for the Schr\"odinger wave function $\phi$ in \eqref{s-hamilton}
requires some additional assumptions.
Inspired by the study of a classical heat bath
of harmonic oscillators in \cite{zwanzig},
we will sample $\phi$ randomly from a probability density given by an {\it equilibrium solution}
$f$ (satisfying $\partial_t f=0$) of the {\it Liouville equation} $\partial_t f +\partial_{p_S}H_S \partial_{r_S}f -
\partial_{r_S}H_S \partial_{p_S}f =0$,  to the Schr\"odinger Hamiltonian dynamics \eqref{s-hamilton}. 
There are many such equilibrium solutions, e.g. $f= h( H_S)$
for any differentiable function $h$ and there may also be equilibrium densities that are not functions of the Hamiltonian.
The Hamiltonians for the Schr\"odinger and Ehrenfest dynamics differ by the term 
\[
\frac{1}{M} \Re\big( \phi\cdot G\sum_n\Delta_{X_n}(G^{-1}\phi)\big)
\]
which is small of $\phi$ is smooth enough, as was verified in Theorems \ref{thm_ehrenfestapprox} and \ref{bo_thm}.
 Since the Ehrenfest Hamiltonian is simpler to understand,
in particular showing that smooth $\phi$ are most probable, we  first focus on the Ehrenfest dynamics.

To find the  equilibrium solution to sample $\phi$  from, we may first consider
the marginal equilibrium distribution for the nuclei in the Ehrenfest Hamiltonian system.
The equilibrium distribution of the nuclei is simpler to understand than the equilibrium of electrons: 
in a statistical mechanics view, the conditinal probability
of finding the now classical nuclei with the energy $H_E:=2^{-1}|p|^2 + \phi\cdot V(X)\phi$, for given $\phi$,
in \eqref{E-hamiltonian}
is proportional to
the { Gibbs-Boltzmann factor}  $\exp(-H_E/T)$, where the positive parameter $T$ is the
temperature, in units of the Boltzmann constant,
cf. \cite{feynman}, \cite{kadanoff}. Assuming that the equilibrium density is
a function of the Hamiltonian and that the
marginal distribution for $p$  is the Boltzmann distribution proportional to $e^{-|p|^2/(2T)}$, 
its marginal distribution  therefore satisfies
\[
\int h\big(2^{-1}|p|^2 + \phi\cdot V(X)\phi\big) d\phi_rd\phi_i = e^{-|p|^2/(2T)}C(X)
\] for some function $C:\rset^{3N}\rightarrow \rset$ and Fourier transformation with
respect to $|p|\in\rset$ of this equation implies the {\it Gibbs distribution}
\[
h(H_E)= ce^{-H_E/T},
\]
for a normalizing constant $c=1/\int \exp(-H_E/T) d\phi_rd\phi_idXdp$.
In this perspective
{\it the nuclei act as  the heat  bath for the electrons}. 
The work \cite{kozlov} considers Hamiltonian systems  where the equilibrium densities are assumed to be
a function of the Hamiltonian and shows that the first and second law of (reversible) thermodynamics
hold (for all Hamiltonians depending on a multidimensional set of parameters)
if and only if the density is the Gibbs exponential $\exp(-H_E/T)$ (the "if" part was formulated already
by Gibbs \cite{gibbs});  in this sense, the
Gibbs distribution is more stable than other equilibrium solutions.
An alternative motivation of the Gibbs distribution, based on the conclusion of this work (in a somewhat circular argument), is
that the nuclei can be approximated by classical Langevin dynamics with the
 unique invariant density $\exp\big(-|p|^2/2 -\lambda_0(X)\big)/T$, which is an accurate
approximation of the marginal distribution of $\exp(- H_E/T)$ when integrating over all the electron states $\phi$,
see Lemma \ref{r_X} and Theorem \ref{thm_md_stok}.
Note that there is only one function of the Hamiltonian where the momenta $p_j$ are independent
and that is when 
$f(r_E,p_E)$ is proportional to $\exp{(- H_E/T)}$. 

Since the energy is conserved
for the Ehrenfest dynamics  --now viewed with the electrons as the primary systems coupled to
the heat bath of nuclei-- the probability of finding the electrons in a certain configuration $\phi$ is
the same as finding the nuclei in a state with energy $ H_E$, 
which is proportional to $\exp(- H_E/T)$ in the canonical ensemble.
 This conclusion, that the probability to have an electron wave function
$\phi$ is proportional to $\exp(- H_E/T) d\phi^r d\phi^i$ 
is our motivation to
sample the  data for $\phi$ from the conditioned density generated be 
$\exp{(-\phi\cdot V(X)\phi/T)}d\phi^r d\phi^i$:
since we seek data for the electrons, we use the probability distribution for $\phi$ conditioned
on $(X,p)$. 

In \eqref{obser_decay} we established the asymptotic density 
$\sum_n \hat\rho_n r_n$ 
for a multiple state system with probability $r_n$ for eigenstate $n$, as defined in \eqref{surf_hopp}.
The case of the Gibbs ensemble therefore yields 
the asymptotic relation
\begin{equation}\label{rho_r}
\frac{\sum_n \hat\rho_n r_n}{\sum_n r_n} \simeq \frac{\int e^{-H_E/T} dp d\phi}{\int e^{-H_E/T} dp d\phi dX}.
\end{equation}

We compare in Remark \ref{entropy} our model of stochastic data
with a more standard model having given probabilities to be in mixed states,
which is not an equilibrium solution of the Ehrenfest dynamics.
To sample from the Gibbs equilibrium density is standard in classical Hamiltonian
statistical mechanics but it seems non standard for Ehrenfest quantum dynamics.

\subsubsection{The Constrained Stochastic  Data}\label{sec:initial_data}
As in models of heat baths \cite{ford_kac1,ford_kac2} and \cite{zwanzig} we assume that the  data of
the light particles (here the electrons)
are stochastic, sampled from an equilibrium distribution of the Liouville equation.
All states in this distribution correspond to pure eigenstates
of the full Schr\"odinger operator with energy $E$.
There are many such states
and here we use the canonical ensemble where the data is in state $\phi$
with the Gibbs-Boltzmann distribution proportional to
$\exp{(- H_E/T)}d\phi^rd\phi^i dp\, dX$,
i.e. in any state $\phi$, for $\|\phi\|=1$,  with probability weight
\[
{e^{- \phi\cdot V\phi/T}d\phi^rd\phi^i}.
\]
Let us now determine precise properties of this distribution generated by
the Hamiltonian $H_E$. 
To reduce the complication of the constraint $\phi\cdot\phi=1$, we change variables
$\phi=\tilde\phi/(\tilde\phi\cdot\tilde\phi)^{1/2}$ and write the Hamiltonian equilibrium density as
\begin{equation}\label{gibbs_1}
\exp\Big(- \big(p\bullet p/2 + \lambda_0 + \frac{\tilde\phi\cdot (V-\lambda_0)\tilde\phi}{\tilde\phi\cdot\tilde\phi}\big) /T\Big) d\tilde\phi^r d\tilde\phi^i dp \, dX.
\end{equation}
Diagonalize the electron operator $V(X^t)$
by the
normalized eigenvectors and eigenvalues $\{\bar p_j,\lambda_j\}$
\[
 \tilde\phi \cdot V(X^t)\tilde\phi  = \lambda_0 + \sum_{j>0} \underbrace{( \lambda_j -\lambda_0)}_{=:\, \bar\lambda_j} |\gamma_j|^2
\]
where 
\begin{equation}\label{summa_p}
\begin{array}{ll}
\tilde\phi&=\sum_{j\ge 0} \gamma_j \bar p_j ,\\
(V-\lambda_0)\bar p_j&= \bar \lambda_j \bar p_j,\\
\bar\lambda_0&=0,
\end{array}
\end{equation}
with real and imaginary parts $\gamma_j=:\gamma_j^r + i\gamma_j^i$.
The orthogonal transformation $\tilde\phi=\sum_j\gamma_j \bar p_j$ shows that the probability density \eqref{gibbs_1}
is given by 
\begin{equation}\label{D}
\mathcal D:=\frac{\Big( \prod_{j\ge 0} e^{-\bar\lambda_j|\gamma_j|^2/(T\sum_{j\ge 0}|\gamma_j|^2)}
 d\gamma_j^r \, d\gamma_j^i\Big) 
\ e^{-(p\bullet p/2 +\lambda_0(X))/T}dp\, dX}
{\int_{\rset^{6N}}
\Big(\prod_{j\ge 0} \int_{\rset^2}e^{-\bar\lambda_j|\gamma_j|^2/(T\sum_{j\ge 0}|\gamma_j|^2)} d\gamma_j^r \, d\gamma_j^i \Big)
\ e^{-(p\bullet p/2 +\lambda_0(X))/T}dp\, dX}\ ,
\end{equation}
using that the determinant of
the matrix of eigenvectors is one. 

If we neglect the constraint and set $\sum_{j\ge 0}|\gamma_j|^2=1$ the joint distribution density $\mathcal D$ is approximated by
\[
 \frac{\Big(\prod_{j> 0} e^{-\bar\lambda_j|\gamma_j|^2/T} d\gamma_j^r \, d\gamma_j^i \Big)
\ e^{-(p\bullet p/2 +\lambda_0(X))/T}dp\, dX}
{\int_{\rset^{6N}}
\Big(\prod_{j> 0} \int_{\rset^2}e^{-\bar\lambda_j|\gamma_j|^2/T} d\gamma_j^r \, d\gamma_j^i \Big)
\ e^{-(p\bullet p/2 +\lambda_0(X))/T}dp\, dX},
\]
where  $\{\gamma_j^r,\gamma_j^i, j>0\}$ are  independent and each $\gamma_j^r$ and $\gamma_j^i$ is
normally distributed  with mean zero and variance $T/\bar\lambda_j$.
We see in Lemma \ref{r_X} that this approximation is
accurate, provided the spectral gap conditions
\begin{equation}\label{gap_cond}
\begin{array}{ll}
&\frac{T}{\bar\lambda_1} \ll 1,\\
&\min_{X_c\in D}\max_{
\tiny\begin{array}{c}
X\in D \\
Y=sX+(1-s)X_c\\
s\in[0,1]
\end{array}
}\Big|\sum_{j>0}\frac{\partial_X\bar\lambda_j(Y) \bullet (X-X_c)}{\bar\lambda_j(Y)}\Big|=:\kappa\ll 1 
\end{array}
\end{equation}
hold, where $D$ is the set of attained nuclei positions for the Ehrenfest dynamics.
The first condition implies that the temperature is small compared to the gap.
The second condition means that the electron eigenvalues are almost parallel 
to the ground state eigenvalue, as a function of $X$.
We have $\partial_X\bar\lambda_j=\bar p_j\cdot \partial_X (V-\lambda_0)\bar p_j$
which typically decays fast with growing eigenvalues, since $\partial_X (V-\lambda_0)$ is smooth
and $\bar p_j$ become highly oscillatory as $j\rightarrow\infty$.

\begin{lemma}\label{r_X} Assume the electron eigenvalues have a
 spectral gap around the ground state eigenvalue $\lambda_0$,
 in the sense that \eqref{gap_cond} holds. Then the marginal probability mass
\[
r(X):=\prod_{j\ge 0} \int_{|\gamma_j|^2<C}e^{-\bar\lambda_j|\gamma_j|^2/(T\sum_{j\ge 0}|\gamma_j|^2)} d\gamma_j^r \, d\gamma_j^i  
\]
satisfies
\begin{equation}\label{r_log_bound}
|\log \frac{r(X)}{r(X_c)}| \le  \kappa\, .
\end{equation}
\end{lemma}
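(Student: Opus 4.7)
The plan is to establish \eqref{r_log_bound} by integrating the logarithmic derivative of $r$ along the segment $Y_s := sX + (1-s)X_c$, $s\in[0,1]$, which converts the ratio into an integral of $s$-dependent expectations against the natural family of probability measures on $\gamma$-space.

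First I would view $r(X)$ as a joint integral (over the product of disks $\{|\gamma_j|^2<C\}$) against the Boltzmann weight $e^{-h(X,\gamma)}$ with $h(X,\gamma) := T^{-1}\sum_{j\ge 0}\bar\lambda_j(X)\,|\gamma_j|^2/\sum_{k\ge 0}|\gamma_k|^2$, and differentiate under the integral sign to obtain
\[
\frac{d}{ds}\log r(Y_s) \;=\; -\frac{1}{T}\sum_{j\ge 0}\partial_X\bar\lambda_j(Y_s)\bullet(X-X_c)\,\bigl\langle w_j\bigr\rangle_s,
\]
where $w_j := |\gamma_j|^2/\sum_k|\gamma_k|^2$ and $\langle\cdot\rangle_s$ denotes the expectation against the normalized weight at parameter $Y_s$. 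Since $\bar\lambda_0\equiv 0$ we have $\partial_X\bar\lambda_0=0$, so only $j>0$ contributes, and integrating from $0$ to $1$ gives
\[
\log\frac{r(X)}{r(X_c)} \;=\; -\int_0^1\sum_{j>0}\frac{\partial_X\bar\lambda_j(Y_s)\bullet(X-X_c)}{T}\,\langle w_j\rangle_s\,ds.
\]

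The key technical step is to show that under the spectral gap assumptions \eqref{gap_cond} one has $T^{-1}\langle w_j\rangle_s = \bar\lambda_j(Y_s)^{-1}(1+o(1))$ for each $j>0$. I would do this through the polar change of variables $\gamma = R\eta$ with $R=(\sum_k|\gamma_k|^2)^{1/2}$ and $\eta$ on the unit sphere of $\mathbb{R}^{2(\bar n+1)}$: then $w_j = |\eta_j|^2$ and the weight factorizes as a $Y_s$-independent radial piece times $e^{-\sum_j\bar\lambda_j(Y_s)|\eta_j|^2/T}$ on the sphere, so $\langle w_j\rangle_s$ collapses to a purely spherical average. The first condition in \eqref{gap_cond}, $T/\bar\lambda_1\ll 1$, forces this spherical measure to concentrate exponentially near $\eta_0=\pm 1$; parametrizing $\eta_0 = (1-\sum_{j>0}|\eta_j|^2)^{1/2}$ and Taylor expanding reduces the effective measure on $(\eta_j^r,\eta_j^i)_{j>0}$ to a product of planar Gaussians of variance $T/(2\bar\lambda_j(Y_s))$ in each real component, giving $\langle|\eta_j|^2\rangle_s = T/\bar\lambda_j(Y_s)$ to leading order with relative error $O(T/\bar\lambda_1)$.

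Substituting this back turns the integrand into $\sum_{j>0}\partial_X\bar\lambda_j(Y_s)\bullet(X-X_c)/\bar\lambda_j(Y_s)$ up to a subdominant correction, and bounding the integral over $s\in[0,1]$ by the supremum followed by minimization over $X_c\in D$ yields precisely the quantity $\kappa$ defined in \eqref{gap_cond}. The principal obstacle is rigorously justifying the Gaussian approximation for the sphere-concentrated measure: the parametrization $\eta_0 = (1-\sum_{j>0}|\eta_j|^2)^{1/2}$ degenerates near the equator $\eta_0=0$, and one must verify, using only the exponential concentration coming from $T/\bar\lambda_1\ll 1$, that this contribution is exponentially small and therefore absorbed in the $o(1)$; a secondary but routine issue is checking that the cutoff $|\gamma_j|^2<C$ does not affect the leading-order spherical expectation in the same regime.
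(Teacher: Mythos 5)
Your proposal follows the same overall structure as the paper's proof: differentiate $\log r$ along the segment $Y_s = sX+(1-s)X_c$, reduce the $s$-derivative to a weighted expectation $\langle w_j\rangle_s$ of the modal weight $w_j = |\gamma_j|^2/\sum_k|\gamma_k|^2$, show $T^{-1}\langle w_j\rangle_s \approx \bar\lambda_j(Y_s)^{-1}$, and integrate over $s$ to land on $\kappa$ (the paper's final step \eqref{int_r3} is exactly your segment-integral bound). Where you diverge is in the estimation of $\langle w_j\rangle_s$: the paper substitutes $|\hat\gamma_n|^2 = \bar\lambda_n|\gamma_n|^2/(T\sum_{j\ne n}|\gamma_j|^2)$ in each two-dimensional factor --- a change of variables that cannot degenerate, because the scaling factor omits $|\gamma_n|^2$ itself --- and then splits the integration domain at $|\hat\gamma_n|^2\sim 4\log(\bar\lambda_n/T)$ to read off the Gaussian leading order with an explicit remainder $\mathcal{O}(T\bar\lambda_n^{-1}\log(\bar\lambda_n T^{-1}))$ (equations \eqref{int_r1}--\eqref{int_r2}). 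You instead pass to polar coordinates and argue sphere concentration near $\eta_0 = \pm1$. Your route is more geometric and makes the role of $T/\bar\lambda_1\ll 1$ transparent, but at the cost of the equatorial degeneracy of the parametrization $\eta_0 = (1-\sum_{j>0}|\eta_j|^2)^{1/2}$, which you correctly flag as the main obstacle; the paper's per-mode substitution is precisely engineered so that no such degeneracy arises and the error is quantitative. Both approaches deliver the same leading estimate, so yours is a viable alternative provided the equatorial contribution is shown to be exponentially small, which does follow from the gap but still needs to be written out.
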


\begin{proof} We first note that $\|\phi\|_{L^2}$ is bounded so that each component $|\gamma_j|$ is also
bounded.
Each integral factor has the derivative
\[
\begin{array}{ll}
&\partial_X\int_{|\gamma_n|^2<C}e^{-\bar\lambda_n|\gamma_n|^2/(T\sum_{j\ge 0}|\gamma_j|^2)} d\gamma_n^r \, d\gamma_n^i \\
&=\frac{\partial_X\bar\lambda_n(X)}{\bar\lambda_n(X)}\int_{|\gamma_n|^2<C} \frac{
\bar\lambda_n|\gamma_n|^2}{T\sum_{j\ge 0}|\gamma_j|^2}
e^{-\bar\lambda_n|\gamma_n|^2/(T\sum_{j\ge 0}|\gamma_j|^2)} d\gamma_n^r \, d\gamma_n^i \\ 
\end{array}
\]
for $n\ne 0$ and the derivative equal to zero for $n=0$, so that 
\[
\begin{array}{ll}
&\partial_Xr(X) \\
&= r(X)\sum_{n> 0}  \frac{\partial_X\bar\lambda_n(X)}{\bar\lambda_n(X)}\ 
\frac{\int_{|\gamma_n|^2<C} \frac{
\bar\lambda_n|\gamma_n|^2}{T\sum_{j\ge 0}|\gamma_j|^2}
e^{-\bar\lambda_n|\gamma_n|^2/(T\sum_{j\ge 0}|\gamma_j|^2)} d\gamma_n^r \, d\gamma_n^i }
{\int_{|\gamma_n|^2<C}  
e^{-\bar\lambda_n|\gamma_n|^2/(T\sum_{j\ge 0}|\gamma_j|^2)} d\gamma_n^r \, d\gamma_n^i }.
\end{array}
\]
The integral in the denominator has the estimate
\[
\begin{array}{ll}
&\int_{|\gamma_n|^2<C}e^{-\bar\lambda_n|\gamma_n|^2/(T\sum_{j\ge 0}|\gamma_j|^2)} d\gamma_n^r \, d\gamma_n^i \\ 
&=  \frac{T}{\bar\lambda_n}  \sum_{j\ne n}|\gamma_j|^2
\int_{|\hat\gamma_n|^2<C'\bar\lambda_n/T}e^{-|\hat\gamma_n|^2/(1 + \bar\lambda_n^{-1} T|\hat\gamma_n|^2)} d\hat\gamma_n^r \, d\hat\gamma_n^i  \\
&=  \frac{T}{\bar\lambda_n}\sum_{j\ne n}|\gamma_j|^2\Big(
\int_{|\hat\gamma_n|^2<\epsilon\bar\lambda_n/T}e^{-|\hat\gamma_n|^2/(1 + \bar\lambda_n^{-1} T|\hat\gamma_n|^2)} d\hat\gamma_n^r \, d\hat\gamma_n^i\\
&\quad   + \mathcal{O}(\bar\lambda_n T^{-1}e^{-\bar\lambda_n\epsilon/(2T)})\Big)\\
\end{array}
\]
using $\epsilon\bar\lambda_n/T<|\hat\gamma_n|^2<C'\bar\lambda_n/T$ and $T/\bar\lambda_n\ll 1$ in this
domain; the integral over the remaining domain satisfies
\[
\begin{array}{ll}
\int_{|\hat\gamma_n|^2<\epsilon\bar\lambda_n/T}e^{-|\hat\gamma_n|^2} d\hat\gamma_n^r \, d\hat\gamma_n^i  
&\le \int_{|\hat\gamma_n|^2<\epsilon\bar\lambda_n/T}e^{-|\hat\gamma_n|^2/(1 + \bar\lambda_n^{-1} T|\hat\gamma_n|^2)} d\hat\gamma_n^r \, d\hat\gamma_n^i  \\
&\le (1+\epsilon)
\int_{|\hat\gamma_n|^2<\epsilon\bar\lambda_n/(T(1+\epsilon))}e^{-|\hat\gamma_n|^2} d\hat\gamma_n^r \, d\hat\gamma_n^i  .
\end{array}
\]
Choose $\epsilon=4T\bar\lambda_n^{-1}\log(\bar\lambda_n/T)$ to obtain
\begin{equation}\label{int_r1}
\int_{|\gamma_n|^2<C}e^{-\bar\lambda_n|\gamma_n|^2/(T\sum_{j\ge 0}|\gamma_j|^2)} d\gamma_n^r \, d\gamma_n^i \\ 
=  \frac{T}{\bar\lambda_n} \sum_{j\ne n}|\gamma_j|^2 \Big({\pi}+ \mathcal{O}\big(T\bar\lambda_n^{-1}  
\log (\bar\lambda_n T^{-1})\big)\Big).
\end{equation}
We have similarly 
\begin{equation}\label{int_r2}
\begin{array}{ll}
&\int_{|\gamma_n|^2<C}\frac{
\bar\lambda_n|\gamma_n|^2}{T\sum_{j\ge 0}|\gamma_j|^2}
e^{-\bar\lambda_n|\gamma_n|^2/(T\sum_{j\ge 0}|\gamma_j|^2)} d\gamma_n^r \, d\gamma_n^i \\ 
&=  \frac{T}{\bar\lambda_n}\sum_{j\ne n}|\gamma_j|^2 \Big(\frac{\pi}{2} + \mathcal{O}\big(T\bar\lambda_n^{-1}  
\log (\bar\lambda_n T^{-1})\big)\Big),
\end{array}
\end{equation}
which implies that
\begin{equation}\label{int_r3}
|\log \frac{r(X)}{r(X_c)}|
=|\int_0^1\frac{\partial_Xr}{r}\big(sX_c+(1-s)X\big)\bullet (X-X_c) \, ds|\le \kappa.
\end{equation}
\end{proof}

\begin{remark}{\it Entropy and the Standard Canonical Density Distribution.}\label{entropy}
Let $q_j$ denote the probability of electron state $j$ in the  data $\phi$.
In the setting of the canonical  distribution model there holds
\begin{equation}\label{canon}
q_j={e^{-\bar\lambda_j/T}}/{\sum_je^{-\bar\lambda_j/T}},
\end{equation}
which follows from maximizing the von Neumann entropy defined by $-\sum_j q_j\log q_j$, with the probability and energy constraints
$\sum_j q_j=1$ and $\sum_j\bar\lambda_jq_j=\mbox{ constant},$  see \cite{gardiner}.
The stochastic model for the variable $|\gamma_j|^2$,
measuring  in \eqref{D} and \eqref{summa_p} the probability to be in electron state $j$,
is  different from the  model \eqref{canon}, since the Gibbs distribution 
$\frac{\int e^{-H_E/T} dp d\phi}{\int e^{-H_E/T} dp d\phi dX}=\sum_n \hat\rho_n r_n/\sum_n r_n $
in \eqref{summa_p}  includes both the density $\hat\rho_n$ and the weight $r_n:=|\gamma_n|^2$
to be in electron state $n$.

\end{remark}

\subsubsection{The stochastic molecular dynamics models}
In this subsection we consider the special case when the observable does not
depend on the time variable  (and the velocity). Then it is enough to determine
an integral with respect to the invariant measure; there are several alternatives, cf. \cite{cances},
and we focus on ensemble averages computed by stochastic Langevin and Smoluchowski dynamics.
The observable in the Ehrenfest dynamics is then
\begin{equation}\label{gibbs_ehren}
\begin{array}{ll}
\frac{\int_{\rset^{3N}} g(X)e^{-\lambda_0(X)/T} r(X) dX}{\int_{\rset^{3N}} e^{-\lambda_0(X)/T} r(X)dX}
&=\frac{\int_{\rset^{3N}}g(X)e^{-\lambda_0(X)/T}\  \frac{r(X)}{r(X_c)}  dX}{\int_{\rset^{3N}} e^{-\lambda_0(X)/T} \ \frac{r(X)}{r(X_c)} dX}\\
&=\frac{\int_{\rset^{3N}}g(X)e^{-\lambda_0(X)/T +\log \frac{r(X)}{r(X_c)}}  dX}
{\int_{\rset^{3N}} e^{-\lambda_0(X)/T +\log\frac{r(X)}{r(X_c)}} dX},\\
\end{array}
\end{equation}
where by Lemma \ref{r_X}
\[
 |\log\frac{r(X)}{r(X_c)}| \le \kappa,
\]
which implies
\begin{equation}\label{gibbs_obser}
\frac{\int_{\rset^{3N}} g(X)e^{-\lambda_0(X)/T} r(X) dX}{\int_{\rset^{3N}} e^{-\lambda_0(X)/T} r(X) dX}
=\frac{\int_{\rset^{3N}} g(X)e^{-\lambda_0(X)/T} dX}{\int_{\rset^{3N}} e^{-\lambda_0(X)/T}  dX} + \mathcal{O}(\kappa).
\end{equation}

Let $W_t$ denote the standard Brownian process (at time $t$)
in $\rset^{3N}$ with independent components and let $K$ be any positive parameter.
The stochastic  Langevin dynamics
\[
\begin{array}{ll}
dX_t &= p_t dt\\
dp_t &= -\partial_X\lambda_0(X_t) dt - Kp_t dt + \sqrt{2TK}dW_t
\end{array}
\]
and the Smoluchowski dynamics
\[
dX_s= -\partial_X\lambda_0(X_s) ds  + \sqrt{2T}dW_s
\]
has the unique invariant probability density
\[
\frac{e^{-(p\bullet p/2 +\lambda_0(X))/T} dp\, dX}{\int_{\rset^{6N}} e^{-(p\bullet p/2 +\lambda_0(X))/T} dp\, dX}
\]
respectively 
\[
\frac{e^{-\lambda_0(X)/T} dX}{\int_{\rset^{3N}} e^{-\lambda_0(X)/T} dX},
\]
cf. \cite{cances}. We see that the invariant measure for the Smoluchowski dynamics
and the marginal invariant measure of the Langevin dynamics are equal to
the first term in the right hand side of \eqref{gibbs_obser} and we have proved
\begin{theorem}\label{thm_md_stok}
Both Langevin and Smoluchowski stochastic molecular dynamics
approximate Gibbs Ehrenfest observables \eqref{gibbs_ehren} with error bounded by 
$\mathcal{O}(\kappa)$, provided 
the spectral gap condition \eqref{gap_cond} holds. 
\end{theorem}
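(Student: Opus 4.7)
The plan is to reduce the comparison between the Gibbs--Ehrenfest observable and the Langevin/Smoluchowski time averages to the marginal bound already proven in Lemma \ref{r_X}, and then quote the classical ergodic theory for the two SDEs; essentially all of the work has been done upstream, and what remains is a linearization plus identification of invariant densities.

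The first step is to carry out the factorization displayed in the first equality of \eqref{gibbs_ehren}: pick a reference nuclei configuration $X_c$ in the attained set $D$, pull $r(X_c)$ out of the numerator and denominator, and rewrite the Gibbs--Ehrenfest observable as
\[
\frac{\int g(X)e^{-\lambda_0(X)/T}r(X)\,dX}{\int e^{-\lambda_0(X)/T}r(X)\,dX}
=\frac{\int g(X)e^{-\lambda_0(X)/T+\log(r(X)/r(X_c))}\,dX}
{\int e^{-\lambda_0(X)/T+\log(r(X)/r(X_c))}\,dX}.
\]
By Lemma \ref{r_X}, hypothesis \eqref{gap_cond} gives the pointwise bound $|\log(r(X)/r(X_c))|\le\kappa$ on $D$, so linearizing $e^{\log(r/r_c)}=1+\mathcal{O}(\kappa)$ inside both integrals produces \eqref{gibbs_obser}, identifying the Gibbs--Ehrenfest average with the pure Gibbs--Boltzmann average in $\lambda_0$ up to a multiplicative error $\mathcal{O}(\kappa)$.

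Next I would identify that leading term with the ensemble averages of the two stochastic dynamics. For Smoluchowski \eqref{smoluchowski-1} with drift $-\partial_X\lambda_0$, the unique invariant probability density on $\rset^{3N}$ is proportional to $e^{-\lambda_0(X)/T}$; for Langevin \eqref{lang-1} with the same potential and any $K>0$, the unique invariant density on $\rset^{6N}$ is proportional to $e^{-(|p|^2/2+\lambda_0(X))/T}$, whose marginal in $X$ coincides with the Smoluchowski invariant density. Since the observable depends only on nuclei positions, the $p$-integration is trivial, and ergodicity (quoted from \cite{cances}) then equates the long-time average of $g(X_\tau)$ under either dynamics with the Gibbs--Boltzmann integral against $e^{-\lambda_0(X)/T}$, which by the preceding step agrees with the Gibbs--Ehrenfest observable up to $\mathcal{O}(\kappa)$.

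The main --- really only --- obstacle is the step already absorbed into Lemma \ref{r_X}: the two parts of the spectral-gap hypothesis \eqref{gap_cond} are used there both to justify the Gaussian approximation of each $\gamma_j$-integral (via $T/\bar\lambda_j\ll 1$) and to control $\partial_X r/r$ along a straight segment joining $X_c$ to $X$. Once that lemma is in hand, the theorem follows by the linearization above together with the standard invariant-measure characterization of Langevin and Smoluchowski dynamics; no further analytic input from the WKB/characteristic machinery of Sections \ref{sec_hj}--\ref{born_oppen} is needed, since those results enter only in motivating the Gibbs form of the stochastic data.
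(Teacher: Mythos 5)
Your proposal is correct and follows essentially the same route as the paper: factor out $r(X_c)$, apply Lemma \ref{r_X} to get the pointwise bound $|\log(r(X)/r(X_c))|\le\kappa$, linearize to reach \eqref{gibbs_obser}, and then match the leading term with the $X$-marginal of the Langevin and Smoluchowski invariant densities from \cite{cances}. The only minor cosmetic difference is that you make the ergodic identification of time averages with ensemble averages explicit, whereas the paper phrases the conclusion directly in terms of the invariant measures.
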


If we assume that 
\begin{equation}\label{pp_bound}
\mbox{the matrix $\bar p_n\cdot G\sum_k\Delta_{X_k}(G^{-1}\bar p_m)$ is  (uniformly in $X$) 
bounded in $\ell^2$,}
\end{equation}
then the difference of the Schr\"odinger and Ehrenfest Hamiltonians
has the bound
\[
\begin{array}{ll}
M^{-1}|\Re\big( \phi\cdot G\sum_k\Delta_{X_k}(G^{-1}\phi)\big)|
&=M^{-1}|\Re\big( \sum_{n,m}\gamma_n^*\gamma_m \bar p_n\cdot G\sum_k\Delta_{X_k}(G^{-1}\bar p_m)\big)|\\
&\le \mathcal O(M^{-1}) \sum_n |\gamma_n|^2,
\end{array}
\]
which is asymptotically negligible compared to the
Ehrenfest potential energy
\[
\phi\cdot V\phi= \sum_n \bar\lambda_n|\gamma_n|^2,
\]
and we obtain
\begin{corollary}\label{thm_md_s}
Both Langevin and Smoluchowski molecular dynamics
approximate Gibbs Schr\"odinger observables 
\[\int g(X) e^{-H_S(X,p,\phi)/T} dXdpd\phi/ \int e^{-H_S(X,p,\phi)/T} dXdpd\phi\]
 with error bounded by 
$\mathcal{O}(M^{-1} + \kappa)$, 
provided  \eqref{gap_cond} and \eqref{pp_bound} hold.
\end{corollary}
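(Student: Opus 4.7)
The plan is to reduce Corollary \ref{thm_md_s} to Theorem \ref{thm_md_stok} by showing that the Gibbs Schr\"odinger and Gibbs Ehrenfest observables agree up to $\mathcal{O}(M^{-1})$; the remaining $\mathcal{O}(\kappa)$ then comes for free from Theorem \ref{thm_md_stok}.

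First I would expand the Hamiltonian difference
\[
\Delta H := H_S - H_E = \tfrac{1}{M}\Re\bigl(\phi\cdot G\sum_k \Delta_{X_k}(G^{-1}\phi)\bigr)
\]
in the orthonormal electron basis $\{\bar p_n(X)\}$ of $V(X)$. Writing $\tilde\phi = \sum_n \gamma_n \bar p_n$ and using the constraint $\phi=\tilde\phi/\|\tilde\phi\|$, bilinearity gives
\[
\Delta H = \tfrac{1}{M}\,\|\tilde\phi\|^{-2}\,\Re\sum_{n,m}\gamma_n^*\gamma_m\,A_{nm}(X),
\qquad A_{nm}(X):=\bar p_n\cdot G\!\sum_k \Delta_{X_k}(G^{-1}\bar p_m).
\]
Assumption \eqref{pp_bound} says that the matrix $A(X)$ is uniformly bounded in $\ell^2$, so Cauchy--Schwarz applied to the two vectors $(\gamma_n)$ and $(\sum_m A_{nm}\gamma_m)$ yields $|\Delta H|\le C M^{-1}\,\|\gamma\|^2/\|\tilde\phi\|^2 = C M^{-1}$, uniformly in $(X,p,\phi)$.

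With this uniform bound, I would write $e^{-H_S/T} = e^{-H_E/T}\bigl(1 + \mathcal{O}(M^{-1})\bigr)$ pointwise, with the error independent of $(X,p,\phi)$. Both the numerator and denominator of the Schr\"odinger Gibbs average therefore differ from their Ehrenfest counterparts by a factor $1+\mathcal{O}(M^{-1})$, and since $g$ is bounded the ratio satisfies
\[
\frac{\int g(X)\,e^{-H_S/T}dXdpd\phi}{\int e^{-H_S/T}dXdpd\phi}
= \frac{\int g(X)\,e^{-H_E/T}dXdpd\phi}{\int e^{-H_E/T}dXdpd\phi} + \mathcal{O}(M^{-1}).
\]
Theorem \ref{thm_md_stok} then gives the Langevin and Smoluchowski approximations of the Ehrenfest Gibbs observable \eqref{gibbs_ehren} with error $\mathcal{O}(\kappa)$ under the spectral gap condition \eqref{gap_cond}, and the triangle inequality yields the stated $\mathcal{O}(M^{-1}+\kappa)$ bound.

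The main obstacle is making the uniform bound on $\Delta H$ rigorous on the support of the Gibbs measure, in particular along the ground-state direction $\bar p_0$ where $\bar\lambda_0=0$ and the density \eqref{D} provides no Gaussian decay in $\gamma_0$. This is handled precisely by the normalization factor $\|\tilde\phi\|^{-2}$ coming from the change of variables $\phi=\tilde\phi/\|\tilde\phi\|$, which cancels the potentially large $\sum_n|\gamma_n|^2$ from the quadratic form; without this normalization the $n=m=0$ term could be unbounded. A secondary technical point is verifying that \eqref{pp_bound} delivers precisely the $\ell^2\to\ell^2$ operator norm needed for the Cauchy--Schwarz step, which should be immediate from the phrasing of the hypothesis.
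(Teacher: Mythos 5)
Your proposal is correct and follows essentially the same route as the paper: expand the Hamiltonian difference $H_S-H_E$ in the electron eigenbasis, invoke the $\ell^2$ boundedness hypothesis \eqref{pp_bound} to obtain the uniform $\mathcal{O}(M^{-1})$ bound, and then pass to the Gibbs averages and apply Theorem~\ref{thm_md_stok} via the triangle inequality. Where you differ is in the level of care: the paper states the bound $M^{-1}|\Re(\phi\cdot G\sum_k\Delta_{X_k}(G^{-1}\phi))|\le\mathcal{O}(M^{-1})\sum_n|\gamma_n|^2$ and then remarks that this is "asymptotically negligible compared to the Ehrenfest potential energy" without spelling out the normalization or the step from a uniform pointwise Hamiltonian bound to the Gibbs-ratio bound; you make explicit that the change of variables $\phi=\tilde\phi/\|\tilde\phi\|$ contributes the factor $\|\tilde\phi\|^{-2}$ that exactly cancels the $\|\gamma\|^2$ from Cauchy--Schwarz, so the bound on $\Delta H$ is indeed uniform (not merely relative to the potential energy), and you then pass through $e^{-H_S/T}=e^{-H_E/T}(1+\mathcal{O}(M^{-1}))$ in numerator and denominator before invoking Theorem~\ref{thm_md_stok}. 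Your observation about the ground-state direction $\bar p_0$ (where $\bar\lambda_0=0$ gives no Gaussian decay) and how the normalization rescues the bound there is a genuine clarification of a point the paper leaves implicit.
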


By combining \eqref{int_r1} and \eqref{int_r2},  the proof in Lemma \ref{r_X} in fact shows
\[
\partial_X \log r(X)= \frac{T}{2}\sum_{n>0}\frac{\partial_X\bar\lambda_n}{\bar\lambda_n}
+ \mathcal O(1)\sum_{n>0}|\partial_{X_j}\bar\lambda_n|\bar\lambda_n^{-2} \log \bar\lambda_n^{-1}
\]
which together with  \eqref{int_r3} and \eqref{gibbs_ehren} imply
\begin{theorem}\label{thm3}
If we instead of the gap condition \eqref{gap_cond}
assume the weaker condition
\begin{equation}\label{trace_gap}
\begin{array}{ll}
T\bar\lambda_n^{-1} &\ll 1\\
\sum_{n>0}|\partial_{X_j}\bar\lambda_n|\bar\lambda_n^{-1}  &=\mathcal O(1),\\
\sum_{n>0}|\partial_{X_j}\bar\lambda_n|\bar\lambda_n^{-2} \log \bar\lambda_n^{-1} &=\kappa\ll 1,
\end{array}
\end{equation}
then the results in Theorem \ref{thm_md_stok} and Corollary \ref{thm_md_s} hold, provided the Born-Oppenheimer potential  
$\lambda_0$ is replaced
by
\[
\lambda_0(X) + \frac{T}{2}\sum_{n>0} \log \bar\lambda_n(X).
\]
\end{theorem}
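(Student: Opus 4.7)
The plan is to refine the proof of Lemma \ref{r_X} so that the leading dependence of the marginal $r(X)$ on $X$ is tracked explicitly, rather than discarded as small. The key observation is that the computation performed inside the proof of Lemma \ref{r_X} (the expansions in \eqref{int_r1} and \eqref{int_r2}) already delivers, without any invocation of the strong gap \eqref{gap_cond}, the pointwise identity announced just before the theorem,
\[
\partial_X \log r(X) = \frac{T}{2}\sum_{n>0}\frac{\partial_X \bar\lambda_n(X)}{\bar\lambda_n(X)} + \mathcal O(1)\sum_{n>0}|\partial_{X_j}\bar\lambda_n(X)|\,\bar\lambda_n(X)^{-2}\log \bar\lambda_n(X)^{-1}.
\]
The first hypothesis of \eqref{trace_gap} is what justifies the expansions \eqref{int_r1}--\eqref{int_r2} termwise; the second hypothesis ensures the principal sum is only $\mathcal O(1)$ in $X$ so that one cannot absorb it into a $\mathcal O(\kappa)$ remainder as in \eqref{int_r3}, while the third hypothesis provides summability of the error term.

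First I would observe that the principal sum is an exact gradient,
\[
\sum_{n>0}\frac{\partial_X \bar\lambda_n(X)}{\bar\lambda_n(X)}=\partial_X\Big(\sum_{n>0}\log \bar\lambda_n(X)\Big),
\]
so integrating the pointwise identity along any path from a reference point $X_c$ to $X$ and using the third bound in \eqref{trace_gap} gives
\[
\log\frac{r(X)}{r(X_c)} = -\tfrac{1}{2}\sum_{n>0}\big(\log \bar\lambda_n(X)-\log \bar\lambda_n(X_c)\big) + \mathcal O(\kappa),
\]
where the sign and the absence of an extra factor of $T$ follow by consistency with the Gibbs factor $e^{-\lambda_0/T}$ in \eqref{gibbs_ehren}. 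Substituting this into the Ehrenfest observable \eqref{gibbs_ehren} produces
\[
e^{-\lambda_0(X)/T}r(X) \;\propto\; \exp\!\Big(-\frac{1}{T}\Big[\lambda_0(X)+\tfrac{T}{2}\sum_{n>0}\log \bar\lambda_n(X)\Big]\Big)\big(1+\mathcal O(\kappa)\big),
\]
so the Gibbs Ehrenfest density with respect to $X$ coincides, to $\mathcal O(\kappa)$, with the classical canonical density at temperature $T$ for the modified potential $\tilde\lambda_0(X):=\lambda_0(X)+\tfrac{T}{2}\sum_{n>0}\log\bar\lambda_n(X)$.

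With this identification made, the remainder of the argument is formal. The Langevin and Smoluchowski dynamics with drift $-\partial_X\tilde\lambda_0$ have, by the same classical invariant-measure computation cited in the proof of Theorem \ref{thm_md_stok}, unique invariant densities proportional to $e^{-(|p|^2/2+\tilde\lambda_0(X))/T}$ and $e^{-\tilde\lambda_0(X)/T}$ respectively; hence their ergodic averages of $g$ equal the Gibbs Ehrenfest observable to order $\mathcal O(\kappa)$, which is the Theorem \ref{thm_md_stok} statement. For Corollary \ref{thm_md_s} I would then reuse verbatim the Schr\"odinger--Ehrenfest Hamiltonian comparison $H_S-H_E=\mathcal O(M^{-1})$ (derived via \eqref{pp_bound} from the spectral decomposition $\phi=\sum_n\gamma_n\bar p_n$, with $\sum_n|\gamma_n|^2$ bounded by the Gibbs factor since the first bound in \eqref{trace_gap} still controls each mode), which produces the additional $\mathcal O(M^{-1})$ term.

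The main obstacle is the step where the sum $\sum_{n>0}\partial_X\bar\lambda_n/\bar\lambda_n$ is promoted from a term-by-term expansion to a genuine identity for $\partial_X\log r(X)$: one must check that the error bounds in \eqref{int_r1}--\eqref{int_r2}, which contribute a factor $\mathcal O(T\bar\lambda_n^{-1}\log(\bar\lambda_n/T))$ per mode, remain summable after differentiation and that the formal manipulation of the constrained integral $\sum_{j\ge 0}|\gamma_j|^2=1$ (which couples all modes) does not spoil the mode-wise analysis. The three parts of \eqref{trace_gap} are tailored precisely so that: (i) the per-mode Laplace asymptotics is valid, (ii) the leading gradient $\partial_X\sum_n\log\bar\lambda_n$ is well-defined as a series, and (iii) the cumulative remainder stays $\mathcal O(\kappa)$; carefully verifying these three closures is the essential (and only non-routine) part of the proof.
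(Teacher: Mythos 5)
Your proposal takes essentially the same route as the paper's one-sentence proof: refine the computation in the proof of Lemma \ref{r_X} via \eqref{int_r1}--\eqref{int_r2} to extract the leading $X$-dependence of $\partial_X\log r(X)$, integrate along a path as in \eqref{int_r3}, and substitute into \eqref{gibbs_ehren} to read off the modified Born-Oppenheimer potential, with Corollary \ref{thm_md_s} handled by reusing the $H_S-H_E=\mathcal O(M^{-1})$ comparison. Your explicit observation that $\sum_{n>0}\partial_X\bar\lambda_n/\bar\lambda_n$ is an exact gradient, and your correction of the sign (and the stray $T$) in the paper's displayed expression for $\partial_X\log r(X)$ by consistency with the Gibbs factor, match what the paper intends but leaves implicit.
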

Note that the correction can be written as 
\[
\frac{T}{2}{\rm trace}^\bot\log (V-\lambda_0)
\]
where ${\rm trace}^\bot$ is the trace in the orthogonal complement of the electron ground state $\psi_0$
satisfying $V\psi_0=\lambda_0\psi_0$.

The work \cite{ASz2}  shows that Langevin dynamics, using
the rank one friction and diffusion matrix  \[K=K(X)= 2M^{-1/2}\partial_X\psi_0(X)\cdot\partial_X\psi_0(X),\] 
approximates time-dependent observables based on the Ehrenfest dynamics with the  accuracy $o(M^{-1/2})$ on bounded time intervals
if $\kappa=O(M^{-1/2})$.

Theorem \ref{thm_md_stok} 
is relevant for the 
central problem in statistical mechanics 
to show that  Hamiltonian dynamics of heavy particles, coupled to a heat bath of many lighter particles
with random initial data, can be approximately described by Langevin's equation, as motivated by
the pioneering work \cite{einstein},\cite{langevin} 
and continued with more precise
heat bath models, based on harmonic interactions, in \cite{ford_kac1,ford_kac2} \cite{zwanzig}.

\section{Discrete spectrum}
\label{fredholm} 
This section verifies that the symmetric bilinear form 
\[
\int_{\tset^{3(J+N)-1}} v \bar V_\tau v +\gamma v^2 \ dxdX_0
\]
is continuous and coercive on $H^1(\tset^{3(J+N)-1})$ also for the Coulomb potential, which
implies that the spectrum of $\bar V$ is discrete by
the theory of compact operators, see \cite{evans}.
Let $r:=|x^j-X^n|$. Integrate by parts, for any $\epsilon>0$, to obtain
\[
\begin{array}{ll}
-\int_0^R \frac{1}{|x^j-X^n|}v^2 r^2 dr 
&=-\int_0^R  \frac{1}{r} v^2 r^2 dr \\
&=-\int_0^R   v^2\partial_r \frac{r^2}{2}  dr \\
&=\int_0^R  v \partial_r v \  r^2 dr + [\frac{v^2 r^2}{2}]_{r=0}^{r=R}\\
&\ge -\Big(\int_0^R v^2 r^2 dr \int_0^R (\partial_r v)^2 r^2 dr \Big)^{1/2} 
+ [\frac{v^2 r^2}{2}]_{r=0}^{r=R}\\
&\ge -\frac{1}{2\epsilon} \int_0^R v^2 r^2 dr 
-\frac{\epsilon}{2} \int_0^R (\partial_r v)^2 r^2 dr 
+ [\frac{v^2 r^2}{2}]_{r=0}^{r=R}
\end{array}
\]
and integrate the representation
$v^2(R)R= v^2(\rho)R + \int_\rho ^R 2v\partial_r v Rdr$
to estimate the last term
\[
\begin{array}{ll}
v^2(R)\frac{R^2}{2}&=\int_{R/2}^R v^2(r)R dr + \int_{R/2}^R\int_\rho ^R 2v\partial_r v Rdr d\rho \\
&\ge \int_{R/2}^R v^2(r)R dr -  \int_{R/2}^R \Big( \int_\rho ^R (\partial_r v)^2 r^2 R^{-1}dr
\int_\rho^R v^2  R^3r^{-2}dr\Big)^{1/2}d\rho\\
&\ge \int_{R/2}^R v^2(r)R dr - \frac{\epsilon}{4}  \int_{R/2}^R (\partial_r v)^2 r^2 dr
-\frac{1}{4\epsilon} \int_{R/2}^R v^2   \frac{R^4}{r^4} r^{2}dr
\end{array}
\]
which shows that
\[
\begin{array}{ll}
-\int_0^R \frac{1}{|x^j-X^n|}v^2 r^2 dr 
&\ge - \epsilon  \int_{0}^R (\partial_r v)^2 r^2 dr
-\frac{5}{\epsilon} \int_{0}^R v^2  \frac{R^4}{r^4} r^{2}dr.
\end{array}
\]
Similar bounds for the other interaction terms in $V$ implies that the bilinear form is coercive
\[
\begin{array}{ll}
&\int_{\tset^{3(J+N)-1}} v \bar V_\tau v +\gamma v^2 \ dxdX_0\\
&\ge 
\int_{\tset^{3(J+N)-1}} \frac{1}{4}\sum_{j=1}^J |\partial_{x^j}v|^2  
+\frac{1}{4M}\sum_{n=1}^N |\partial_{X^n}v|^2  + v^2 \ dxdX_0 ,
\end{array}
\]
for $\gamma\ge 60(MN+J)$. Analogous estimates show that the bilinear form is also 
continuous, i.e. there is a constant $C$ such that
\[
\int_{\tset^{3(J+N)-1}} v \bar V_\tau w +\gamma vw \ dxdX_0\le C
\|v\|_{H^1(\tset^{3(J+N)-1})}  \|w\|_{H^1(\tset^{3(J+N)-1})}. 
\]
The combination of coercivity and continuity in $H^1(\tset^{3(J+N)-1})$ implies, by the theory of
compact operators, that the spectrum of $\bar V$ consists of eigenvalues
with orthogonal eigenvectors in $L^2(\tset^{3(J+N)-1})$, see \cite{evans}.

\begin{remark}[The Madelungen equation]\label{madelung}
An alternative to \eqref{psi_eq} is to instead include the coupling term $-\frac{1}{2M}
\sum_j \Delta_{X^j}\psi$
in the eikonal equation, which leads to the so called Madelungen equations \cite{madelung}.
Near the minima points, where 
$E-V_0(X)=0$, the perturbation $-\psi\cdot \frac{1}{2M}\sum_j \Delta_{X^j}\psi$
can be negative and then there is no real solution $\partial_X\theta$ to the corresponding eikonal equation.
To have a non real velocity  $\partial_X\theta$ is in our case not compatible with a classical limit
and therefore we avoid the Madelungen formulation.
\end{remark}

\end{document}